\documentclass{article}

\usepackage{arxiv}
\usepackage[utf8]{inputenc} 
\usepackage[T1]{fontenc}    
\usepackage{hyperref}       
\usepackage{url}            
\usepackage{booktabs}       
\usepackage{amsfonts}       
\usepackage{nicefrac}       
\usepackage{microtype}      
\usepackage{natbib}
\usepackage{doi}
\usepackage[utf8]{inputenc}
\usepackage{amsmath}
\usepackage{amsthm}
\usepackage{amsfonts} 
\usepackage{graphicx}
\usepackage{xcolor}
\usepackage{float}
\usepackage{algorithmicx}
\usepackage{algorithm}
\usepackage{algpseudocode}
\usepackage{mathtools}
\usepackage[title]{appendix}
\usepackage[english]{babel}
\addto\captionsenglish{}

\newtheorem{theorem}{Theorem}
%


\newtheorem{lemma}[theorem]{Lemma}
\newtheorem*{remark}{Remark}
\newtheorem{example}{Example}
\newcommand\norm[1]{\left\lVert#1\right\rVert}
\newcommand{\matr}[1]{\mathbf{#1}}
\renewcommand{\vec}[1]{\boldsymbol{#1}}
\DeclareMathOperator{\clr}{clr}
\DeclareMathOperator{\ilr}{ilr}
\DeclareMathOperator{\gilr}{gilr}
\DeclareMathOperator{\diag}{diag}
\DeclareMathOperator{\tr}{tr}
\DeclareMathOperator*{\argmax}{arg\,max}
\DeclareMathOperator*{\argmin}{arg\,min}


\title{ Extending compositional data analysis from a graph signal processing perspective}


\author{ Christopher Rieser \thanks{This research was supported by the Austrian Science Fund (FWF) under the grant number P 32819 Einzelprojekte.}  \\
	Institute of Statistics and Mathematical Methods in Economics\\
	TU Wien\\
    Wiedner Hauptstraße, 1040 Vienna, Austria  \\
	\texttt{christopher.rieser@tuwien.ac.at} \\
	\And
	Peter Filzmoser \\
	Institute of Statistics and Mathematical Methods in Economics\\
	TU Wien\\
    Wiedner Hauptstraße, 1040 Vienna, Austria  \\
	\texttt{peter.filzmoser@tuwien.ac.at} \\
}




\begin{document}
\maketitle

\begin{abstract}
Traditional methods for the analysis 
of compositional data consider the log-ratios between
all different pairs of variables with equal weight, typically in the
form of aggregated contributions. This is not meaningful in contexts 
where it is known that a relationship only exists between very 
specific variables (e.g.~for metabolomic pathways), while for other
pairs a relationship does not exist. Modeling absence or presence of relationships 
is done in graph theory, where the vertices represent the variables, and the
connections refer to relations. This paper links compositional data analysis
with graph signal processing, and it extends the Aitchison geometry to a setting where
only selected log-ratios can be considered. The presented framework retains the 
desirable properties of scale invariance and compositional coherence. An additional extension to include absolute information is readily made. 
Examples from bioinformatics and geochemistry underline the usefulness of this
approach in comparison to standard methods for compositional data analysis.
\end{abstract}

\keywords{Compositional Data, Log-Ratio Analysis, Graph Theory, Graph Signal Processing, Graph Laplacian}

\section{Introduction}

Since the fundamental work of \citet{aitchison1982statistical}, the field of compositional data analysis (CoDa) has received a lot of attention. 
Analyzing positive multivariate data from a compositional point of view shifts the focus from the Euclidean perspective of absolute quantities to the view of relative information. Many data sets have since then been recognized to be of compositional nature. Examples include Microbiome data \citet{gloor2017microbiome}, omics data \citet{quinn2019field}, time-use data \citet{dumuid2019compositional}, economical data and many more. Good overviews of standard theory on CoDa can be found in \citet{Aitchison86}, \citet{pawlowsky2006compositional} or \citet{filzmoser2018applied}.

In the following we will denote the space of multivariate positive real values $\{ (x_1,\ldots,x_D )' \in \mathbb{R}^D \mid x_j > 0 \, \forall j=1,\ldots,D \}$ by $\mathbb{R}^D_{+}$ and define the D-part simplex as
\begin{align*}
    \mathcal{S}^D:=\bigg\{ (x_1,\ldots,x_D)' \in \mathbb{R}^D_{+} \biggl\vert \sum_{j = 1}^D x_j =1 \bigg\} \subset \mathbb{R}^D_+ \ .
\end{align*}
For two compositions $\vec{x}=(x_1,\ldots,x_D)',\vec{y}=(y_1,\ldots,y_D)' \in \mathbb{R}^D_{+}$ and $\alpha \in \mathbb{R}$ the following two operations called perturbation and powering are defined,
\begin{itemize}
    \item 
    $\vec{x} \oplus \vec{y} :=(x_1 y_1,\ldots,x_D y_D)'$
    \item $\alpha \odot \vec{x} :=  (x_1^\alpha,\ldots,x_D^\alpha)'$,
\end{itemize}
as well as their scaled versions 
\begin{itemize}
    \item 
    $\vec{x} \oplus_{\mathcal{A}} \vec{y} := \frac{1}{\sum_{j =1}^D x_j y_j}\vec{x} \oplus\vec{y}$
    \item $\alpha \odot_{\mathcal{A}} \vec{x} := \frac{1}{\sum_{j =1}^D x_j^\alpha} \alpha \odot \vec{x}$,
\end{itemize}
see \citet{Aitchison86}. The D-part simplex $\mathcal{S}^D$ is equipped with an inner product, known as the Aitchison inner product,
 \begin{align}
        {\langle \vec{x},\vec{y} \rangle}_{\mathcal{A}} := \frac{1}{2D} \sum^D_{i,j = 1} \log\bigg(\frac{x_i}{x_j}\bigg)\log\bigg(\frac{y_i}{y_j}\bigg) , \label{aitchisonInner}
\end{align}
such that $(\mathcal{S}^D, {\langle \cdot , \cdot \rangle}_{\mathcal{A}} , \oplus_{\mathcal{A}}, \odot_{\mathcal{A}})$ is a Hilbert space with neutral element $\frac{1}{D}(1,\ldots,1)' \in \mathbb{R}^D_{+}$ and norm $\norm{\vec{x}}_{\mathcal{A}} := \sqrt{{\langle \vec{x},\vec{x} \rangle}_{\mathcal{A}}}$, see \citet{pawlowsky2015modeling}.

A common tool for the analysis of compositional data is the 
clr (centered log-ratio)-map
\begin{align} \label{standardCLR}
  \clr : \mathcal{S}^D \rightarrow \mathbb{R}^D, \quad    \clr{(\vec{x})} := \Bigg(\log\Bigg(\frac{x_1}{\sqrt[D]{\prod^D_{j=1} x_j}}\Bigg),\ldots ,\log\Bigg(\frac{x_D}{\sqrt[D]{\prod^D_{j=1} x_j}}\Bigg)\Bigg)' ,
\end{align}
which can be shown to be distance preserving on $\mathcal{S}^D$ \citet{Aitchison86}. Further, the clr-map has the following properties,
\begin{align}
    & \clr(\vec{x} \oplus_{\mathcal{A}} \vec{y}) = \clr(\vec{x}) + \clr(\vec{y}) \label{addIso}  \\ 
    & \clr(\alpha \odot_{\mathcal{A}} \vec{x}) = \alpha \clr(\vec{x}) \label{multIso} \\ 
    & {\langle \vec{x},\vec{y} \rangle}_{\mathcal{A}} = {\langle \clr(\vec{x}),\clr(\vec{y}) \rangle}_{2}  \label{innerIso} 
\end{align}
where ${\langle \cdot , \cdot \rangle}_{2}$ denotes the standard inner product in $\mathbb{R}^D$ \citet{Aitchison86}.
As the clr-map is not one-to-one onto $\mathbb{R}^D$, a modification has been considered by 
\citet{egozcue2003isometric}, called the  
ilr (isometric log-ratio)-map
\begin{align} \label{ilrCoda}
   \ilr_{\matr{V}} : \mathcal{S}^D \rightarrow \mathbb{R}^{D-1}, \quad  \ilr_\matr{V}(\vec{x}) := \matr{V}'\clr(\vec{x}) \ ,
\end{align}
where  $\matr{V} \in \mathbb{R}^{D \times (D-1)}$ is a matrix with orthogonal columns spanning the $D-1$ dimensional subspace $\{ \vec{z} \in \mathbb{R}^D \mid \sum_{j=1}^D z_j = 0 \} \subset \mathbb{R}^D$. The ilr-map is not unique depending on the chosen basis, but is an isometric one-to-one map onto $\mathbb{R}^{D-1}$ that fulfills (\ref{addIso}), (\ref{multIso}) and (\ref{innerIso}).
Thus, the purpose of a clr (ilr)-map is to transform compositional data to the 
standard Euclidean geometry for which classical tools in statistical data analysis
are appropriate and designed. 
Note that the $l$-th component of the clr-map can be written in terms of pairwise log-ratios, since
$$
\log\Bigg(\frac{x_l}{\sqrt[D]{\prod^D_{j=1} x_j}}\Bigg) =
\frac{1}{D}\Bigg( \log\frac{x_l}{x_1} + \ldots +\log \frac{x_l}{x_{l-1}}+
\log \frac{x_l}{x_{l+1}}+ \ldots +\log \frac{x_l}{x_D}\Bigg) ,
$$
for $l\in \{1,\ldots ,D\}$, and thus clr (ilr)-maps consider the information
of all pairwise log-ratios, and they receive equal weight in the analysis.
This is not always desirable, because log-ratios of pairs
which are not in a meaningful relationship might not be considered at all
for the analysis.

In this paper we investigate the connections between CoDa and signal processing on graphs, and it will be shown that CoDa can be viewed as calculus on finite graphs. The goal is to extend tools of the latter by defining a scale invariant inner product that depends on the graphical structure. Subsequently, a scale invariant isometric one-to-one map shall be identified that depends on the graph structure and on weights, such that after transforming the original data one can work 
again in a Euclidean space. 
Additionally, we will show that we obtain a framework in which also the absolute information of the data can be considered. The main idea of the paper is thus to modify the Aitchison inner product in such a way that only certain log-ratios influence the geometry of the space. This can be done by looking at a weighted version of the Aitchison inner product (\ref{aitchisonInner}), with weights $w_{ij}$:
\begin{align*}
   \frac{1}{2} \sum^D_{i,j = 1} \log\bigg(\frac{x_i}{x_j}\bigg)\log\bigg(\frac{y_i}{y_j}\bigg) w_{ij}
\end{align*}
These weights can be fixed before the analysis to only let the information of certain important log-ratios play a role, or they can be chosen in a data dependent way. 

The idea of weighting in CoDa has been considered before, see \citet{van2014bayes}, \citet{hron2017weighted}, \citet{greenacre2019variable}, \citet{greenacre2021comparison} and \citet{hron2021weighted}. \citet{greenacre2019variable} also considers only keeping a few log-ratios which represent the data well and draws connections to graph theory.
Our contribution differs, however, from the mentioned ones in that the underlying geometry is adapted through distinct $w_{ij}$, and that we are able after transforming the data to work in the standard Euclidean geometry. 

Before introducing the new concepts in detail, we will recapture some important results of graph theory.

\subsection{Some results from graph theory}

In this subsection we use the notation $(f_1,\ldots,f_D)\in \mathbb{R}^D$ and $(g_1,\ldots,g_D)\in \mathbb{R}^D$ for two sets of variables, in order 
to avoid any confusion with the compositional case. We define a graph as a fixed pair $(\mathcal{V},\matr{W})$, where $\mathcal{V}:=\{1,\ldots,D\}$ denotes a set of indices, and $\matr{W} = (w_{ij})_{\substack{1 \leq i,j \leq D }} \in \mathbb{R}^{D\times D}$ is a symmetric matrix with zero diagonal and non-negative entries, corresponding to weights between indices. 
Graphs are useful to model the relation between variables $(f_1,\ldots,f_D)\in \mathbb{R}^D$. 
The idea is that the bigger a weight $w_{ij}$, the bigger the relationship between the two variables $f_i$ and $f_j$ is. Whenever $w_{ij}$ is zero there is no relation. 

The edge-set of a graph is defined as $\mathcal{E}:= \{ (i,j) \mid w_{ij} \neq 0 \} \subset \mathcal{V}^2$. We write in the following  $i \sim j$ whenever $(i,j) \in \mathcal{E}$. We say that there exists a path from the vertex $i$ to the vertex $j$ if there are vertices $i_1,i_2,\ldots,i_k$, with $i = i_1, j=i_k $, and $w_{i_1 i_2} \neq 0 ,w_{i_2 i_3} \neq 0 ,\ldots,w_{i_{k-1} i_k}\neq 0$. A subset of indices $\{i_1,\ldots,i_k\} \subset \mathcal{V} $ is called connected if there is a path from each vertex in the subset to another vertex in the subset. If the subset is equal to $\mathcal{V}$ then the graph is said to be connected, otherwise we say it is disconnected. The set of vertices $\mathcal{V}$ can always be written as a union of its connected components $\mathcal{V} =  \cup_{m=1}^M \mathcal{V}_m$, with disjoint sets $\mathcal{V}_1,\ldots,\mathcal{V}_M$, for any graph. Such a decomposition can be obtained by starting with one vertex, say $v_1 = 1$, and looking for all other vertices connected with $v_1$ through a path to obtain $\mathcal{V}_1$. Deleting $\mathcal{V}_1$ from $\mathcal{V}$ we can restart the procedure to get $\mathcal{V}_2$, and so on.  

An important analytical tool for finite graphs is the so called Laplacian-matrix, defined as
\begin{align} \label{mainform}
\matr{L}_{\matr{W}}:= 
\diag{\bigg(\sum_{1 \sim j} w_{1j},\ldots ,\sum_{D\sim j} w_{Dj}\bigg)} - \matr{W}
\end{align}
where $\diag$ is the diagonal matrix of the corresponding entries.
The definition of $\matr{L}_{\matr{W}}$ is motivated by the following key equality
\begin{align} \label{keyEq}
    \frac{1}{2}\sum_{(i,j) \in \mathcal{E}} (f_i-f_j)(g_i-g_j)w_{ij} = \vec{f}' \matr{L}_{\matr{W}} \vec{g},
\end{align}
for any $\vec{f}, \vec{g} \in \mathbb{R}^D$, see \citet{merris1994laplacian}.


\begin{remark}
We do not necessarily need to assume that $w_{ij} = w_{ji}$ holds. Assume $u_{ij}$ are non-negative weights, with $u_{ii}=0$, but not necessarily  symmetric. Then we can define symmetric weights  by setting $w_{ij}:=\frac{1}{2}(u_{ij}+u_{ji})$.
For these $w_{ij}$ we have
\begin{align*}
\frac{1}{2}\sum_{i,j = 1}^D (f_i-f_j)(g_i-g_j)u_{ij} &= \frac{1}{2}\Bigg(\sum_{i<j} (f_i-f_j)(g_i-g_j)u_{ij} + \sum_{j<i} (f_i-f_j)(g_i-g_j)u_{ij}\Bigg)\\  &= \frac{1}{2} \sum_{i<j} (f_i-f_j)(g_i-g_j)(u_{ij}+u_{ji})  \\
&= \frac{1}{2} \sum_{i,j=1}^D (f_i-f_j)(g_i-g_j)w_{ij}.
\end{align*}
\end{remark}

Another important tool in graph theory is the incidence matrix $\matr{d}_{\matr{W}} \in \mathbb{R}^{\mid\mathcal{E}\mid \times \mid V \mid}$. 
For fixed weights $\matr{W}$ it is defined as 
\begin{align*}
    (\matr{d}_{\matr{W}} )_{e,l}:=  
    \begin{cases} 
      w_{lj} & e = (l,j) \\
      -w_{il} & e = (i,l) \\
      0 & \text{else} .
   \end{cases}
\end{align*}
The incidence matrix defines a graph theoretic  analog to usual differentiation along weighted edges. For a fixed edge $e = (l,j)$ we get $( \matr{d}_{\matr{W}} \vec{f})_e = (w_{lj}(f_l-f_j))$ which measures the weighted difference between values in the nodes $l$ and $j$.
A standard result in graph theory is the equality $\matr{d}_{\matr{W}^{\frac{1}{2}}}' \matr{d}_{\matr{W}^{\frac{1}{2}}} = 2\matr{L}_{\matr{W}}$, where the superscript $\frac{1}{2}$ is understood as a coordinate-wise operation. With the latter property we can write Equation (\ref{keyEq}) also as
\begin{align} \label{keqEqExtended}
    \frac{1}{2}\sum_{(i,j) \in \mathcal{E}} (f_i-f_j)(g_i-g_j)w_{ij} = \vec{f}' \matr{L}_{\matr{W}} \vec{g} = \frac{1}{2} \langle \matr{d}_{\matr{W}^{\frac{1}{2}}} \vec{f} , \matr{d}_{\matr{W}^{\frac{1}{2}}} \vec{g} \rangle_{2}.
\end{align}

For a more thorough introduction to graph theory we refer to \citet{gross2005graph} or \citet{chung1997spectral}.

The properties of $\matr{L}_{\matr{W}}$ have been analyzed extensively. 
In this paper we will need the following standard results in spectral graph theory, see \citet{Mohar91thelaplacian} for a proof.

\begin{lemma} \label{LaplacianLemma}
Assume that $\matr{W}$ is symmetric with zero diagonal and non-negative entries, then:
\begin{itemize}
   \item  $\matr{L}_{\matr{W}}$ is a symmetric positive semi-definite matrix.
    \item The vector of all ones $\matr{1} := (1,\ldots,1)' \in \mathbb{R}^D$  is always an eigenvector to zero of $\matr{L}_{\matr{W}}$, i.e.~$\matr{L}_{\matr{W}} \matr{1} = \vec{0}$.
    \item If $\mathcal{V}$ is the union of more than one connected component, $\mathcal{V} =  \cup_{m=1}^M \mathcal{V}_m$, then for each connected component $\mathcal{V}_m$, the vector in $\mathbb{R}^D$ with ones at position $i \in \mathcal{V}_m$ and otherwise zeros, $\matr{1}_{i \in \mathcal{V}_m} \in \mathbb{R}^D$, is an eigenvector to zero, i.e. $\matr{L}_{\matr{W}}\matr{1}_{i \in \mathcal{V}_m} = \vec{0}$.
    The vectors $\vec{1}_{i \in \mathcal{V}_m}$ span the kernel of $\matr{L}_{\matr{W}}$.
    \item There exists a permutation matrix $\matr{P}$ such that $\matr{P}\matr{L}_{\matr{W}}\matr{P}'$ is in block diagonal form with blocks $\matr{L}_1,\ldots,\matr{L}_M$ for the weights  $\matr{P}\matr{W}\matr{P}'$.
    \item The Laplacian-matrix $\matr{L}_{\matr{W}}$ has exactly $M$ many zero eigenvalues.
\end{itemize}
\end{lemma}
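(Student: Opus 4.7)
The plan is to exploit the quadratic-form identity (\ref{keyEq}) throughout, as it immediately converts spectral claims about $\matr{L}_{\matr{W}}$ into statements about the non-negative weights $w_{ij}$.

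First I would handle symmetry and positive semi-definiteness together. Symmetry is immediate since the diagonal degree matrix is symmetric and $\matr{W}$ is symmetric by assumption. For positive semi-definiteness, set $\vec{g}=\vec{f}$ in (\ref{keyEq}) to obtain
\begin{align*}
\vec{f}'\matr{L}_{\matr{W}}\vec{f}=\frac{1}{2}\sum_{(i,j)\in\mathcal{E}}(f_i-f_j)^2 w_{ij}\geq 0,
\end{align*}
since each $w_{ij}$ is non-negative. The same identity, with $\vec{f}=\matr{1}$, gives $\matr{1}'\matr{L}_{\matr{W}}\vec{g}=0$ for every $\vec{g}$, and combined with symmetry this forces $\matr{L}_{\matr{W}}\matr{1}=\vec{0}$.

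Next I would treat the connected-component indicators $\matr{1}_{i\in\mathcal{V}_m}$. The cleanest path is to argue first that they lie in the kernel by a direct row-wise computation: for $k\in\mathcal{V}_m$ only neighbors inside $\mathcal{V}_m$ contribute (those are the only $j$ with $w_{kj}\neq 0$), so the degree cancels the sum of weighted ones; for $k\notin\mathcal{V}_m$ the relevant weights $w_{kj}$ with $j\in\mathcal{V}_m$ vanish by definition of a connected component. For the reverse inclusion, suppose $\vec{f}\in\ker\matr{L}_{\matr{W}}$. Then $\vec{f}'\matr{L}_{\matr{W}}\vec{f}=0$, so by the displayed quadratic form $f_i=f_j$ whenever $w_{ij}\neq 0$, and hence $f_i=f_j$ whenever $i$ and $j$ are linked by a path. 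Thus $\vec{f}$ is constant on each $\mathcal{V}_m$ and is a linear combination of the $\matr{1}_{i\in\mathcal{V}_m}$. Since these indicator vectors have disjoint supports they are linearly independent, so they form a basis of $\ker\matr{L}_{\matr{W}}$.

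For the block-diagonal statement, I would choose a permutation $\matr{P}$ that reorders $\mathcal{V}$ so that the indices of $\mathcal{V}_1$ come first, then those of $\mathcal{V}_2$, and so on. Since $w_{ij}=0$ whenever $i$ and $j$ lie in different components, the matrix $\matr{P}\matr{W}\matr{P}'$ is block-diagonal, the diagonal degree matrix is block-diagonal (trivially), and hence so is $\matr{P}\matr{L}_{\matr{W}}\matr{P}'$; each block is the Laplacian of the restricted weight matrix on $\mathcal{V}_m$. Finally, since $\matr{L}_{\matr{W}}$ is symmetric positive semi-definite, its rank equals the number of non-zero eigenvalues, and by the kernel characterization established above the kernel has dimension exactly $M$. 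Therefore $\matr{L}_{\matr{W}}$ has exactly $M$ zero eigenvalues.

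The only step that requires real care is showing that the kernel is spanned by the component indicators; the subtlety lies in translating $\vec{f}'\matr{L}_{\matr{W}}\vec{f}=0$ into constancy along paths and then along connected components. Everything else is a direct consequence of symmetry, non-negativity of the weights, and the identity (\ref{keyEq}).
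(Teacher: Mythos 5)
Your proof is correct and complete; note that the paper itself does not prove Lemma~\ref{LaplacianLemma} but defers to the standard spectral graph theory literature (\citet{Mohar91thelaplacian}), and your argument --- positive semi-definiteness via the quadratic form (\ref{keyEq}), the kernel characterization by constancy along paths, and the component-ordering permutation for the block structure --- is exactly the classical proof given there. Nothing further is needed.
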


From now on we will assume for non-connected graphs that the Laplacian-matrix $\matr{L}_{\matr{W}}$ is always in block diagonal form. This can always be achieved by simply relabeling the vertices using Lemma~\ref{LaplacianLemma}.

\section{Compositional data on graphs}

\subsection{Graph simplex space, norms and inner products}

Equation (\ref{keyEq}) allows us to make the connection to compositional data: When taking the weight matrix
 $\matr{W}= \frac{1}{D}(\vec{1} \vec{1}' - \matr{I}_D) $, $w_{ij} = \frac{1}{D}$, where $\matr{I}_D$ denotes the identity matrix of dimension $D$, and  $\matr{L}_{A}:= (1-\frac{1}{D})\matr{I}_D - \frac{1}{D}(\vec{1} \vec{1}' - \matr{I}_D) = \matr{I}_D - \frac{1}{D}\vec{1} \vec{1}'$, we recover for any $\vec{x},\vec{y} \in \mathcal{S}^D$
\begin{align*}
    \langle \vec{x},\vec{y} \rangle_{\mathcal{A}} &= \log(\vec{x}) ' \matr{L}_{\mathcal{A}} \log(\vec{y}).
\end{align*}
$\matr{L}_{A}$ is known as the centering matrix, see \citet{marden2019analyzing}. It has exactly one eigenvalue equal to zero, with eigenvector $\vec{1}$. All other eigenvalues are equal to 1. The eigenvector $\vec{1}$ corresponds to the null space of $\matr{L}_{A}$ and so the rescaling invariance of the Aitchison inner product is a direct consequence of the Laplacian matrix $\matr{L}_{A}$, as  $\matr{L}_{A}\log(c\odot \vec{ x}) = \matr{L}_{A}(\log(\vec{x}) + \log(c) \vec{1}) = \matr{L}_{A}\log(\vec{ x})$ holds for any positive constant $c$. 
We can see that on  $\mathbb{R}^D_+$ the bilinear form $\log(\vec{x}) ' \matr{L}_{\mathcal{A}} \log(\vec{y})$ is not an inner product as we can only deduce from $\log(\vec{x})' \matr{L}_{\mathcal{A}} \log(\vec{x}) = 0$ that $\log(\vec{x})$ is in the null space of $\matr{L}_{A}$. Instead of considering quotient spaces and modified operations, an additional condition, such as $\sum_{j=1}^D x_j = 1$, is used in compositional data.

Given a graph $(\mathcal{V},\matr{W})$, with a partition into connected components $\mathcal{V} = \cup_{m = 1}^M \mathcal{V}_m$, this leads to the definition of the \textit{D-part Graph Simplex} as
\begin{align}
    \mathcal{S}^D_{\matr{W}}:=\bigg\{ (x_1,\ldots,x_D)' \in \mathbb{R}^D_{+} \bigm\vert \sum_{j \in \mathcal{V}_m} x_j = \kappa_m,  m = 1,\ldots,M  \bigg\}, \label{graphSimplex}
\end{align}
for some $\kappa_1,\ldots,\kappa_M > 0$, and scaled versions of the perturbation and the powering operations such that the latter map into $\mathcal{S}^D_{\matr{W}}$:

\begin{itemize}
    \item 
    $(\vec{x} \oplus_{\matr{W}} \vec{y})_{i \in \mathcal{V}_m} := \frac{\kappa_m}{\sum_{j \in \mathcal{V}_m} x_j y_j} (x_i y_i)_{i \in \mathcal{V}_m}'  $
    \item $(\alpha \odot_{\matr{W}} \vec{x})_{i \in \mathcal{V}_m} := \frac{\kappa_m}{\sum_{_{j \in \mathcal{V}_m}} x_j^\alpha} (x_i^\alpha)_{i \in \mathcal{V}_m }' $
\end{itemize}
for all $m = 1,\ldots,M$, where the subscript $i \in \mathcal{V}_m$ denotes the entries with index in $ \mathcal{V}_m$. Note that in the definition of $\mathcal{S}^D_{\matr{W}}$ other conditions could be used, provided that the perturbation and the power operation is changed accordingly.
The natural extension of the Aitchison inner product to a graph structure on $\log(\vec{x})$ is to equip $\mathcal{S}^D_{\matr{W}}$ with the inner product $\log(\vec{x})' \matr{L}_{\matr{W}} \log(\vec{y})$.
In more generality we define for a fixed $\alpha \geq 0 $ with non-negative entries
\begin{align} 
    &\langle \vec{x}, \vec{y} \rangle_{\matr{W},\alpha}:=
    \alpha \langle \log(\vec{x}), \log(\vec{y})  \rangle_{2} + \langle  \log(\vec{x}),\matr{L}_{\matr{W}} \log(\vec{y})  \rangle_{2} \label{fullInner} \\
    & \norm{\vec{x}}_{\matr{W},\alpha}:= \sqrt{\langle \vec{x}, \vec{x} \rangle_{\matr{W},\alpha}}
\end{align}
for any $\vec{x},\vec{y} \in \mathbb{R}^D_+$. 

This definition is motivated by Equation (\ref{keqEqExtended}), $\langle  \vec{f},\matr{L}_{\matr{W}} \vec{g}  \rangle_{2} =  \frac{1}{2} \langle \matr{d}_{\matr{W}^{\frac{1}{2}}} \vec{f} , \matr{d}_{\matr{W}^{\frac{1}{2}}} \vec{g} \rangle_{2}$. The incidence matrix $\matr{d}_{\matr{W}^{\frac{1}{2}}}$ can be seen as the graph analogue to directional differentiation of real valued functions, \citet{ostrovskii2005sobolev}, see \citet{grady2010discrete} for an introduction to calculus on graphs. Therefore, $\langle  \vec{f},\matr{L}_{\matr{W}} \vec{g}  \rangle_{2}$ can be thought of as the graph analogue to the bilinear form $ (f,g) \mapsto \int f'(x)g'(x) dx$, for functions $f,g :\mathbb{R} \rightarrow \mathbb{R}$ in a suitable function space, giving rise to the Sobolev semi-norm in functional analysis, see \citet{adams2003sobolev}. Adding $\int f(x)g(x)dx$ to $\int f'(x)g'(x)dx$ turns it into an inner product. 

\begin{lemma} \label{graphCodaInnerProd}
The space $(\mathbb{R}^D_+,\oplus,\odot)$  equipped with $\langle \vec{x}, \vec{y} \rangle_{\matr{W},\alpha}$, for fixed $\alpha > 0$ is a Hilbert space. 
Similarly, $(\mathcal{S}^D_{\matr{W}},\oplus_{\matr{W}},\odot_{\matr{W}})$ equipped with $\langle \vec{x}, \vec{y} \rangle_{\matr{W},\alpha}$ for $\alpha = 0$ is a Hilbert space. 
For $\alpha = 0$ we recover the inner product 
$  \frac{1}{2}\sum_{(i,j)\in \mathcal{E}} \log\big(\frac{x_i}{x_j}\big)\log \big(\frac{y_i}{y_j}\big) w_{ij}$.

\end{lemma}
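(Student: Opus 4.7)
The plan is to transfer everything to the log-domain, where the operations $\oplus,\odot$ become the standard vector addition and scalar multiplication on $\mathbb{R}^D$, and the two bilinear forms become inner products (or semi-inner products) on $\mathbb{R}^D$ induced by the symmetric matrix $\alpha\matr{I}_D+\matr{L}_{\matr{W}}$. Since $\log:\mathbb{R}^D_+\to\mathbb{R}^D$ is a bijection satisfying $\log(\vec x\oplus\vec y)=\log\vec x+\log\vec y$ and $\log(\alpha\odot\vec x)=\alpha\log\vec x$, the algebraic part (that $(\mathbb{R}^D_+,\oplus,\odot)$ is a real vector space and that $\langle\cdot,\cdot\rangle_{\matr{W},\alpha}$ is bilinear and symmetric) is immediate. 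Positive semi-definiteness is inherited from Lemma~\ref{LaplacianLemma}, which states that $\matr{L}_{\matr{W}}\succeq 0$.

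For the first assertion ($\alpha>0$ on $\mathbb{R}^D_+$), definiteness is the quick observation that $\alpha\|\log\vec x\|_2^2+\langle\log\vec x,\matr{L}_{\matr{W}}\log\vec x\rangle_2=0$ forces $\log\vec x=\vec 0$, i.e.~$\vec x=(1,\ldots,1)'$, which is the neutral element for $\oplus$. Completeness I would argue as follows: the matrix $\alpha\matr{I}_D+\matr{L}_{\matr{W}}$ is symmetric positive definite, so the quadratic form it induces is equivalent to the standard Euclidean one on $\mathbb{R}^D$, hence $\log$ becomes a homeomorphism between $\mathbb{R}^D_+$ (with the metric induced by $\|\cdot\|_{\matr{W},\alpha}$) and the complete space $(\mathbb{R}^D,\|\cdot\|_2)$; completeness then transfers along this isometry.

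For the second assertion ($\alpha=0$ on $\mathcal{S}^D_{\matr{W}}$), the subtlety is that $\matr{L}_{\matr{W}}$ has a nontrivial kernel, spanned by the indicators $\vec 1_{i\in\mathcal{V}_m}$ of the connected components (Lemma~\ref{LaplacianLemma}). The plan is to use the normalization $\sum_{j\in\mathcal{V}_m}x_j=\kappa_m$ to select a unique representative in each coset of $\ker\matr{L}_{\matr{W}}$. First I would check that $\oplus_{\matr{W}}$ and $\odot_{\matr{W}}$ are well-defined on $\mathcal{S}^D_{\matr{W}}$ and satisfy the vector space axioms with neutral element having entries $\kappa_m/|\mathcal{V}_m|$ on $\mathcal{V}_m$. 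Bilinearity is preserved because the rescaling factors in $\oplus_{\matr{W}},\odot_{\matr{W}}$ add constants on each connected component to $\log$, and such constants lie in $\ker\matr{L}_{\matr{W}}$ and are therefore annihilated by the bilinear form. For definiteness, $\langle\log\vec x,\matr{L}_{\matr{W}}\log\vec x\rangle_2=0$ forces $\log\vec x\in\ker\matr{L}_{\matr{W}}$, so $\vec x$ is constant on each $\mathcal{V}_m$, and the constraint then pins down this constant to be $\kappa_m/|\mathcal{V}_m|$, i.e.~the neutral element. Completeness I would again get via an isometry: modding out the kernel, $\log$ identifies $\mathcal{S}^D_{\matr{W}}$ with a closed linear subspace $U$ of $\mathbb{R}^D$ (the orthogonal complement of $\ker\matr{L}_{\matr{W}}$ translated by $(\log(\kappa_m/|\mathcal{V}_m|))_{m}$), on which $\matr{L}_{\matr{W}}$ restricts to a positive definite operator inducing an equivalent norm; closedness in the complete Euclidean space yields completeness.

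The last claim is the shortest: for $\alpha=0$ the formula $\langle\vec x,\vec y\rangle_{\matr{W},0}=\langle\log\vec x,\matr{L}_{\matr{W}}\log\vec y\rangle_2$ combined with the key identity~(\ref{keyEq}) applied to $\vec f=\log\vec x,\;\vec g=\log\vec y$ gives exactly $\tfrac{1}{2}\sum_{(i,j)\in\mathcal{E}}\log(x_i/x_j)\log(y_i/y_j)w_{ij}$. The main obstacle, as above, is the $\alpha=0$ case: carefully identifying the neutral element and verifying that the normalization constraints correctly factor out the kernel of $\matr{L}_{\matr{W}}$, so that the semi-definite form actually becomes an inner product on $\mathcal{S}^D_{\matr{W}}$.
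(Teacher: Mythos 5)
Your proposal is correct and follows essentially the same route as the paper: pass to the log-domain, use positive semi-definiteness of $\matr{L}_{\matr{W}}$ from Lemma~\ref{LaplacianLemma}, obtain definiteness for $\alpha>0$ from the term $\alpha\|\log\vec{x}\|_2^2$, and for $\alpha=0$ identify the null vectors with the kernel spanned by the component indicators and pin down the neutral element via the constraints $\sum_{j\in\mathcal{V}_m}x_j=\kappa_m$. You are somewhat more explicit than the paper about why the renormalization in $\oplus_{\matr{W}},\odot_{\matr{W}}$ is annihilated by the form and about completeness (which is automatic in finite dimensions, and which the paper does not discuss); the only small imprecision is that $\log(\mathcal{S}^D_{\matr{W}})$ is not itself an affine subspace of $\mathbb{R}^D$ — the identification with $(\ker\matr{L}_{\matr{W}})^\perp$ goes through the quotient or the projection, not through the raw image — but this does not affect the argument.
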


The proof can be found in the Appendix. 

\begin{remark}
Extensions of $\langle \vec{x}, \vec{y} \rangle_{\matr{W},\alpha}$ are possible by adding $ \langle  \log(\vec{x}),\matr{L}^m_{\matr{W}} \log(\vec{y})  \rangle_{2} $ to the former for some $m \in  \mathbb{N}, m >1 $. In the case of $m = 2$ this gives the additional term $\sum_{i = 1}^D \log\bigg(\frac{x_i^{d(i)}}{\prod_{i\sim j } x_j^{w_{ij}  }}\bigg)    \log\bigg(\frac{y_i^{d(i)}}{\prod_{i\sim j } y_j^{w_{ij}  }}\bigg)$, where $d(i):=\sum_{i\sim j} w_{ij}$. Again one can show that $(\mathcal{S}^D_{\matr{W}},\oplus_{\matr{W}},\odot_{\matr{W}})$ equipped with such an inner product is a Hilbert space. 
\end{remark}

Equation (\ref{keqEqExtended}) also motivates to extend the inner product $ \langle \vec{x}, \vec{y} \rangle_{\matr{W},\alpha}$ to $q$-norms. Define 
the standard $q$-norm on the Euclidean space -- $\vec{f} \in \mathbb{R}^D$:
\begin{align*}
    \norm{\vec{f}}_{q}:= 
    \begin{cases} 
     \big( \sum_{i = 1}^D \mid f_i \mid^q \big)^\frac{1}{q} &  \text{for } 1\leq q < \infty \\
      \max_{i = 1,\ldots,D} \mid f_i \mid & \text{for } q = \infty \\
   \end{cases}.
\end{align*}
Then for any $\vec{x} \in \mathbb{R}^D_+$ we define
\begin{align*}
    \norm{ \vec{x} }_{q,\alpha} :=
    \begin{cases}
       \bigg( \alpha \norm{\log(\vec{x})}^q_q + \frac{1}{2}\norm{\matr{d}_{\matr{W}^{\frac{1}{q}}} \log(\vec{x})}^q_q \bigg)^\frac{1}{q}  \text{for } 1\leq q < \infty \\
     \max{\bigg(\alpha \norm{\log(\vec{x})}_q,\frac{1}{2}\norm{\matr{d}_{\matr{W}} \log(\vec{x})}_q,\bigg)} \text{for }q =  \infty  &
    \end{cases}.
\end{align*}
For fixed $\alpha$ with $\alpha > 0$ resp. $\alpha = 0$, $\norm{ \cdot }_{q,\alpha}$ is a norm on $(\mathbb{R}^D_+,\oplus,\odot)$ resp. $(\mathcal{S}^D_{\matr{W}},\oplus_{\matr{W}},\odot_{\matr{W}})$. The proof is in the Appendix.

\subsection{ Centered and isometric log-ratio transforms  } \label{clrAndIlr}

In the classical compositional setting, where $\matr{L}_{\matr{W}} = \matr{L}_{\mathcal{A}}$, the $\clr$ and the $\ilr_{\matr{V}}$ mappings play an important role in establishing isometry of $\mathcal{S}^D$ to $\mathbb{R}^{D-1}$,
and they are often used to interpret results. 

In the notation of the Laplacian-matrix the $\clr$ map is given by
\begin{align*}
   \clr{(\vec{x})}  = \matr{L}_{\mathcal{A}}  \log(\vec{x}) . 
\end{align*}
This motivates us to define the weighted clr map as
\begin{align*}
     \clr{(\vec{x})}_{\matr{W},\alpha}:=\big(\alpha \matr{I} +\matr{L}_{\matr{W}}\big)^{\frac{1}{2}} \log(\vec{x}).
\end{align*}
In the compositional setting we have $\matr{L}_{\mathcal{A}} = \matr{L}_{\mathcal{A}}^2$, and from this follows (\ref{innerIso}).
Similarly, the mapping $\vec{x} \mapsto \clr{(\vec{x})}_{\matr{W},\alpha}$ is also distance preserving as $\langle \vec{x}, \vec{y} \rangle_{\matr{W},\alpha} =  \langle  \big(\alpha \matr{I} +\matr{L}_{\matr{W}}\big)^{\frac{1}{2}} \log(\vec{x}),\big(\alpha \matr{I} +\matr{L}_{\matr{W}}\big)^{\frac{1}{2}} \log(\vec{y})  \rangle_{2} $ holds for any $\vec{x},\vec{y}$.
Note that for $\langle\cdot, \cdot \rangle_{\matr{W},\alpha}$
the matrix  $\alpha \matr{I} +\matr{L}_{\matr{W}}$ is not distance preserving. However its interpretation might be easier as $((\alpha \matr{I} +\matr{L}_{\matr{W}})\log(\vec{x}))_j$ is equal to $ \log\Big(\frac{x_j^{\alpha +  d_j}}{\prod_{i \sim j} x_i^{ w_{ij}} }\Big)$, with $d_j:= \sum_{i \sim j} w_{ij}$, for any $\vec{x} \in \mathbb{R}^D_+$.
The interpretation of the latter is that each variable is centered by its weighted neighborhood. The square root, although distance preserving is not necessarily 
 sparse.
Instead, we construct more interpretable one-to-one mappings similar to the compositional case.
For specific choices of $\matr{V}$  the ilr map (\ref{ilrCoda}) leads to a very interpretable one-to-one isometry, see \citet{filzmoser2018applied} or \citet{fivserova2011interpretation}, 
\begin{align} \label{ilrpivot}
    (\ilr(\vec{x})_{\matr{V}})_j:=\sqrt{\frac{D-j}{D-j+1}} \log\Big( \frac{x_j}{\sqrt[D-j]{\prod_{i=j+1}^D x_i}} \Big).
\end{align}
The $j$-th coordinate $ (\ilr(\vec{x})_{\matr{V}})_j$ only incorporates information of $x_i$ with $i \geq j$. Looking at the coordinate for $j=1$ leads to a high interpretability as $x_i$ only appears in the first coordinate.

In the graphical setting we take the following approach  to obtain an interpretable isometry.
The idea is to use a modified version of the Cholesky decomposition for positive semi-definite matrices to write $\matr{L}_{\mathcal{A}} = \matr{C}'\matr{C}$, where $\matr{C}$ is an upper triangular matrix with non-negative diagonal elements. The $j$-th entry of the mapping $\vec{x} \mapsto \matr{C}\log(\vec{x})$ contains only the information of $\log(x_{j}),\ldots,\log(x_{D})$ as a weighted sum similar to (\ref{ilrpivot}).

\begin{lemma} \label{cholForgraph}
%
Denote $\matr{L}_1,\ldots,\matr{L}_M$ the diagonal blocks of $\matr{L}_{\matr{W}}$. Then there exist upper triangular matrices $\matr{C}_1,\ldots,\matr{C}_M$ 
having as last row entirely zeros and all diagonal elements, except the last one, positive
such that
\begin{align*}
   \matr{L}_{\matr{W}} =\matr{C}'\matr{C} \ , 
\end{align*} where $\matr{C} = 
\diag{(\matr{C}_1,\ldots,\matr{C}_M)}$.
The matrices  $\matr{C}_m$, for $m=1\ldots ,M$
fulfill $\matr{C}_m \vec{1} = \vec{0}$. Each matrix $\matr{C}_m$ can be obtained by computing the eigen-decompositions of each $\matr{L}_m = \matr{U}_m \matr{\Sigma}_m  \matr{U}_m'$, continued by the QR decompositions  $ \matr{U}_m\matr{\Sigma}^{\frac{1}{2}}_m \matr{U}_m' = \matr{Q}_m \matr{R}_m$, and setting, if necessary after a sign change of the diagonal elements, $\matr{C}_m = \matr{R}_m$.

\end{lemma}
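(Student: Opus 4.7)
The plan is to reduce the statement to one block at a time and then build $\matr{C}_m$ for each connected component by composing a symmetric square root of $\matr{L}_m$ with a QR factorisation. Since $\matr{L}_{\matr{W}}$ is block diagonal by the standing convention, a block-diagonal $\matr{C} = \diag(\matr{C}_1,\ldots,\matr{C}_M)$ automatically satisfies $\matr{C}'\matr{C} = \matr{L}_{\matr{W}}$ and inherits upper-triangularity from the individual $\matr{C}_m$, so only the per-block construction needs attention. Fix one block of size $D_m := |\mathcal{V}_m|$.

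For that block I would follow the recipe in the statement verbatim: form the spectral decomposition $\matr{L}_m = \matr{U}_m \matr{\Sigma}_m \matr{U}_m'$ (possible because $\matr{L}_m$ is symmetric positive semi-definite by Lemma~\ref{LaplacianLemma}), define $\matr{A}_m := \matr{U}_m \matr{\Sigma}_m^{1/2} \matr{U}_m'$, the unique symmetric positive semi-definite square root of $\matr{L}_m$, QR-factor $\matr{A}_m = \matr{Q}_m \matr{R}_m$, and then left-multiply $\matr{R}_m$ by a $\pm 1$ diagonal sign matrix (absorbing the same matrix into $\matr{Q}_m$ so that it remains orthogonal) so that the resulting diagonal of $\matr{R}_m$ is non-negative. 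Setting $\matr{C}_m := \matr{R}_m$, the identity $\matr{C}_m' \matr{C}_m = \matr{R}_m' \matr{Q}_m' \matr{Q}_m \matr{R}_m = \matr{A}_m'\matr{A}_m = \matr{A}_m^2 = \matr{L}_m$ is immediate from orthogonality of $\matr{Q}_m$ and symmetry of $\matr{A}_m$.

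The kernel property $\matr{C}_m \vec{1} = \vec{0}$ comes out as follows. Lemma~\ref{LaplacianLemma} gives $\ker(\matr{L}_m) = \mathrm{span}(\vec{1})$ for the connected block, and $\ker(\matr{A}_m) = \ker(\matr{L}_m)$ because $\|\matr{A}_m \vec{x}\|_2^2 = \vec{x}'\matr{L}_m \vec{x}$. Hence $\matr{A}_m \vec{1} = \vec{0}$, and invertibility of $\matr{Q}_m$ transfers this to $\matr{C}_m \vec{1} = \vec{0}$. Upper-triangularity then collapses the last row of $\matr{C}_m$ to just its $(D_m, D_m)$-entry, which must match the row sum $0$; so the last row vanishes entirely, as claimed.

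The hard part will be ruling out zeros among the other diagonal entries, rather than only establishing non-negativity. I plan to extract this from the rank equality $\mathrm{rank}(\matr{C}_m) = \mathrm{rank}(\matr{A}_m) = D_m - 1$, which holds because $\matr{Q}_m$ is invertible and $\dim \ker(\matr{A}_m) = 1$. The relation $\matr{C}_m \vec{1} = \vec{0}$ already places the last column of $\matr{C}_m$ in the span of the first $D_m - 1$ columns, so those first $D_m - 1$ columns must themselves be linearly independent to deliver rank $D_m - 1$. Combined with the vanishing last row, this independence is equivalent to non-singularity of the principal upper-triangular $(D_m - 1) \times (D_m - 1)$ submatrix of $\matr{C}_m$, whose determinant is $\prod_{k=1}^{D_m - 1} C_{kk}$. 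Hence each $C_{kk}$ for $k < D_m$ is nonzero, and the sign adjustment promotes ``nonzero'' to ``strictly positive''.
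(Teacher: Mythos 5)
Your proposal is correct and follows essentially the same route as the paper's proof: eigen-decomposition of each block, QR factorisation of the symmetric square root $\matr{U}_m\matr{\Sigma}_m^{1/2}\matr{U}_m'$, a sign fix on the diagonal, $\matr{R}_m\vec{1}=\vec{0}$ via the kernel, and a rank comparison to rule out zero diagonal entries before the last. Your write-up is in fact slightly more explicit than the paper's where it matters, namely in spelling out why equality of ranks forces the first $D_m-1$ diagonal entries to be nonzero.
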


Lemma \ref{cholForgraph} gives a mapping which puts the emphasis on the first coordinate of each subgraph. By permuting the coordinates of each subgraph via permutation matrices we can put the focus on the coordinates of interest. All together we get the following.

\begin{lemma}[Graph Isometric Log-ratio map (GILR1)] \label{graphILR}

Denote  $\matr{P}_1,\ldots,\matr{P}_M$ permutations of $\mathcal{V}_1,\ldots,\mathcal{V}_M$, and $\matr{C}_1,\ldots,\matr{C}_m$ the Cholesky decomposition  of the Laplacian matrices $\matr{P}_m\matr{L}_{m}\matr{P}_m ' $ as in Lemma \ref{cholForgraph}. Then for $\alpha = 0$, the map 
\begin{align} \label{ilrReduced}
    \vec{x} \mapsto \diag{(\matr{C}_{-\mid \mathcal{V}_1 \mid },\ldots,\matr{C}_{-\mid \mathcal{V}_M \mid })} \diag{(\matr{P}_1,\ldots,\matr{P}_M)}  \log(\vec{x}),
\end{align}
where $\matr{C}_{-\mid \mathcal{V}_m \mid}$ denotes the matrix $\matr{C}_m$ after deletion of the last row,
is a linear one-to-one isometry from $(\mathcal{S}^D_{\matr{W}},\oplus_{\matr{W}},\odot_{\matr{W}})$, equipped with $\langle \cdot, \cdot \rangle_{\matr{W},0}$, to $(\mathbb{R}^{D-M},+,\cdot)$ equipped with $\langle \cdot,\cdot \rangle_{2}$.
For $\alpha >0 $ compute the Cholesky decompositions  $\alpha\matr{I} +  \matr{P}_m\matr{L}_m\matr{P}_m' =  \Tilde{\matr{C}}_m'\Tilde{\matr{C}}_m$, where $\Tilde{\matr{C}}_m$ are upper triangular matrices with strictly positive diagonals, then
\begin{align} \label{ilrAll}
    \vec{x} \mapsto  \diag{(\Tilde{\matr{C}}_1,\ldots,\Tilde{\matr{C}}_M)} \diag{(\matr{P}_1,\ldots,\matr{P}_M)} \log(\vec{x})
\end{align}
is a linear one-to-one isometry from $(\mathbb{R}^D_+,\oplus,\odot)$, equipped with $\langle \cdot, \cdot \rangle_{\matr{W},\alpha}$, to $(\mathbb{R}^{D},\langle \cdot,\cdot \rangle_{2},+,\cdot)$, equipped with $\langle \cdot,\cdot \rangle_{2}$.
\end{lemma}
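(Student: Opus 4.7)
The plan is to handle both cases in parallel, leveraging the block-Cholesky structure from Lemma~\ref{cholForgraph}. Set $\matr{P}=\diag(\matr{P}_1,\ldots,\matr{P}_M)$ and $\matr{C}=\diag(\matr{C}_1,\ldots,\matr{C}_M)$. By Lemma~\ref{cholForgraph} applied block by block we have $\matr{P}_m\matr{L}_m\matr{P}_m'=\matr{C}_m'\matr{C}_m$, hence $\matr{P}\matr{L}_{\matr{W}}\matr{P}'=\matr{C}'\matr{C}$, and therefore
\[
\langle\vec{x},\vec{y}\rangle_{\matr{W},0} = \log(\vec{x})'\matr{L}_{\matr{W}}\log(\vec{y}) = \langle \matr{C}\matr{P}\log(\vec{x}),\, \matr{C}\matr{P}\log(\vec{y}) \rangle_{2}.
\]
Because each $\matr{C}_m$ has zero last row, stripping that row (as in~(\ref{ilrReduced})) does not alter the Euclidean inner product on the image, so the reduced map into $\mathbb{R}^{D-M}$ is still an isometry. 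The $\alpha>0$ case is identical, with $\alpha\matr{I}+\matr{L}_{\matr{W}}$ in place of $\matr{L}_{\matr{W}}$ and $\Tilde{\matr{C}}=\diag(\Tilde{\matr{C}}_1,\ldots,\Tilde{\matr{C}}_M)$ in place of $\matr{C}$; no row needs to be removed because each $\Tilde{\matr{C}}_m$ has strictly positive diagonal.

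The next step is linearity with respect to the operations. For $\alpha>0$ this is immediate from $\log(\vec{x}\oplus\vec{y})=\log(\vec{x})+\log(\vec{y})$ and $\log(\alpha\odot\vec{x})=\alpha\log(\vec{x})$. For $\alpha=0$ a short calculation gives $\log(\vec{x}\oplus_{\matr{W}}\vec{y})=\log(\vec{x})+\log(\vec{y})+\sum_{m=1}^{M}c_m\vec{1}_{\mathcal{V}_m}$ for block-wise renormalization constants $c_m$, and analogously for $\odot_{\matr{W}}$. Since Lemma~\ref{cholForgraph} gives $\matr{C}_m\vec{1}=\vec{0}$, and $\matr{P}_m$ maps block-indicator vectors to block-indicator vectors, all correction terms lie in $\ker(\matr{C}\matr{P})$ and vanish after the mapping, leaving additivity and homogeneity intact.

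Bijectivity is then treated case by case. When $\alpha>0$, each $\Tilde{\matr{C}}_m$ is invertible (upper triangular with strictly positive diagonal), $\matr{P}$ is orthogonal, and $\log:\mathbb{R}^D_+\to\mathbb{R}^D$ is a bijection, so the composition is a bijection. When $\alpha=0$, injectivity follows from $\ker(\matr{C}_m)=\ker(\matr{C}_m'\matr{C}_m)=\ker(\matr{L}_m)=\mathrm{span}(\vec{1})$, where the last equality uses Lemma~\ref{LaplacianLemma} applied to the connected component $\mathcal{V}_m$: two compositions in $\mathcal{S}^D_{\matr{W}}$ with equal images must then have block log-vectors differing by an additive constant, but the affine constraint $\sum_{j\in\mathcal{V}_m}x_j=\kappa_m$ forces that constant to vanish. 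Surjectivity follows by a dimension count, since $\mathcal{S}^D_{\matr{W}}$ has real dimension $D-M$ and a linear isometric injection into $\mathbb{R}^{D-M}$ is automatically onto.

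The main subtlety I expect is in the $\alpha=0$ case, where the map is a bijection only of constrained spaces, not of the ambient ones. Tracking this requires recognizing that $\ker(\matr{C})=\bigoplus_{m}\mathrm{span}(\vec{1}_{\mathcal{V}_m})$ matches both the block renormalizations built into $\oplus_{\matr{W}}$ and $\odot_{\matr{W}}$ and the affine constraints defining $\mathcal{S}^D_{\matr{W}}$; once this alignment is made explicit, the remainder is essentially routine linear algebra.
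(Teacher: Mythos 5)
Your proof is correct and follows essentially the same route as the paper's: the same block Cholesky identity $\matr{P}\matr{L}_{\matr{W}}\matr{P}'=\matr{C}'\matr{C}$ from Lemma~\ref{cholForgraph} yields the isometry, and bijectivity rests on the rank structure of the $\matr{C}_m$. The only differences are organizational --- you prove injectivity via $\ker(\matr{C}_m)=\ker(\matr{L}_m)$ plus the affine constraint and get surjectivity from a dimension count, where the paper argues surjectivity from the rank and lets injectivity follow from the isometry --- and you make explicit the linearity with respect to $\oplus_{\matr{W}}$ and $\odot_{\matr{W}}$ (which the paper merely asserts) by observing that the block renormalization constants lie in $\ker(\matr{C}\matr{P})$.
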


To obtain an orthogonal system $\log(\vec{v}_{i}) \in \mathcal{S}^D_{\matr{W}}$ resp. $ \log(\vec{v}_{i}) \in \mathbb{R}^D_+$, with $i = 1,...,D-M$, where $M$ is zero in the second case, for the inner product $\langle \vec{x}, \vec{y} \rangle_{\matr{W},\alpha}$ we can solve the linear equations $F(\vec{v}_{i}) = \vec{1}_{j \in \{i\}}$ where $F$ is the map introduced in (\ref{ilrReduced}) resp. (\ref{ilrAll}). Then for any $k \neq l$ it follows $\langle \vec{v}_{k}, \vec{v}_{l} \rangle_{\matr{W},\alpha} = \langle F(\vec{v}_{k}),F(\vec{v}_{l})  \rangle_2 = \langle  \vec{1}_{j \in \{k\}},  \vec{1}_{j \in \{l\}}  \rangle_2 = \delta_{kl}$.

The matrices in (\ref{ilrAll}) and (\ref{ilrReduced})  of Lemma \ref{graphILR} do not have orthogonal rows, however, they are just as interpretable as in the usual compositional data case. For simplicity assume that the graph is connected. After the choice of a permutation matrix $\matr{P}$, which can be identified with the mapping $\pi:\{1,\ldots,D\} \rightarrow \{1,\ldots,D\}$, we see that for (\ref{ilrReduced}), by $\matr{C}_1 \vec{1} = 0$, the first element is $ \log \Big(\frac{x_{\pi(1)}^{c_{11}}}{\prod_{i=2}^D x_{\pi(i)}^{-c_{1i}}} \Big) =  \log \Big(\prod_{i=2}^D \big(\frac{x_{\pi(1)}}{ x_{\pi(i)}}\big)^{-c_{1i}} \Big)$, the second element $ \log \Big(\frac{x_{\pi(2)}^{c_{22}}}{\prod_{i=3}^D x_{\pi(i)}^{-c_{2i}}} \Big) =  \log \Big(\prod_{i=3}^D \big(\frac{x_{\pi(2)}}{ x_{\pi(i)}}\big)^{-c_{1i}} \Big)$ and so on, up to $ \log \Big(\frac{x_{\pi(D-1)}^{c_{(D-1)(D-1)}}}{ x_{\pi(D)}^{-c_{(D-1)D}}} \Big) =   \log \Big(\big(\frac{x_{\pi(D-1)}}{ x_{\pi(D)}}\big)^{-c_{(D-1)D}} \Big)$.

For (\ref{ilrAll}) we get the same expressions as above on the left side of the equalities, and in addition the last element is $c_{DD} \log(x_{\pi(D)})$.
Other one-to-one isometries can be constructed.

\begin{lemma}[Graph Isometric Log-ratio map 2 (GILR2)]
Taking an eigen-decomposition of $\alpha \matr{I} +\matr{L}_{\matr{W}}$, $\alpha \matr{I} +\matr{L}_{\matr{W}} = \matr{U} \matr{\Sigma} \matr{U}' $, where the diagonal elements $\lambda_1,\ldots ,\lambda_D$ of $\matr{\Sigma}$ are ordered from biggest to smallest and denoting the i-th column of $\matr{U}$ by $\vec{u}_i$ and $M$ the number of zero eigenvalues, we can define a one-to-one isometric map, $(\gilr_{\matr{W}})$, by  
\begin{align} \label{ilr2}
    \vec{x} \mapsto ( \sqrt{\lambda_i} \langle  \vec{u}_i,\log(\vec{x}) \rangle_2 )_{i \in 1,\ldots ,D-M}' \in \mathbb{R}^{D-M}.
\end{align}
 Its inverse is given by $\vec{z} \mapsto \sum_{i = 1}^{D-M} z_i \frac{1}{\sqrt{\lambda_i}} \vec{u}_i  $.
\end{lemma}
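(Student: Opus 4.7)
The plan is to establish three things in turn: (i) the map preserves the inner product, hence is an isometry; (ii) it is injective on the stated domain; (iii) it is surjective with the stated inverse. The whole argument rests on the spectral decomposition $\alpha\matr{I}+\matr{L}_{\matr{W}} = \sum_{i=1}^D \lambda_i \vec{u}_i\vec{u}_i'$ together with the kernel characterization from Lemma~\ref{LaplacianLemma}.

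First I would verify the isometry directly. Plugging the spectral decomposition into the definition of $\langle\cdot,\cdot\rangle_{\matr{W},\alpha}$ gives
\begin{align*}
\langle \vec{x},\vec{y}\rangle_{\matr{W},\alpha} = \log(\vec{x})'(\alpha\matr{I}+\matr{L}_{\matr{W}})\log(\vec{y}) = \sum_{i=1}^D \lambda_i \langle \vec{u}_i,\log(\vec{x})\rangle_2 \langle \vec{u}_i,\log(\vec{y})\rangle_2.
\end{align*}
The $M$ summands with $\lambda_i = 0$ vanish, and what remains is precisely $\langle \gilr_{\matr{W}}(\vec{x}),\gilr_{\matr{W}}(\vec{y})\rangle_2$.

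For injectivity I would split on $\alpha$. When $\alpha > 0$, Lemma~\ref{LaplacianLemma} (PSD plus the $\alpha\matr{I}$ shift) gives $\lambda_i>0$ for all $i$, so $M=0$ and $\gilr_{\matr{W}}$ is the composition of $\log$ with the invertible linear map $\diag(\sqrt{\lambda_1},\ldots,\sqrt{\lambda_D})\matr{U}'$. When $\alpha=0$, suppose $\gilr_{\matr{W}}(\vec{x})=\gilr_{\matr{W}}(\vec{y})$; then $\log(\vec{x})-\log(\vec{y})$ is orthogonal to $\vec{u}_1,\ldots,\vec{u}_{D-M}$ and thus lies in $\mathrm{span}(\vec{u}_{D-M+1},\ldots,\vec{u}_D) = \ker(\matr{L}_{\matr{W}})$, which by Lemma~\ref{LaplacianLemma} is spanned by the component indicators $\vec{1}_{i\in\mathcal{V}_m}$. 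Writing $\log(\vec{x})-\log(\vec{y}) = \sum_m c_m \vec{1}_{i\in\mathcal{V}_m}$ and invoking the simplex constraint $\sum_{j\in\mathcal{V}_m}x_j=\sum_{j\in\mathcal{V}_m}y_j=\kappa_m$ forces $e^{c_m}=1$ block by block, so $\vec{x}=\vec{y}$.

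For surjectivity I would, given $\vec{z}\in\mathbb{R}^{D-M}$, set $\vec{a} := \sum_{i=1}^{D-M} z_i \lambda_i^{-1/2}\vec{u}_i$ and $\vec{x} := \exp(\vec{a})$ componentwise; in the $\alpha=0$ case I then rescale each block $\mathcal{V}_m$ so that $\sum_{j\in\mathcal{V}_m}x_j=\kappa_m$. This rescaling only adds a multiple of $\vec{1}_{i\in\mathcal{V}_m}$ to $\log(\vec{x})$, which sits in $\ker(\matr{L}_{\matr{W}})$ and is orthogonal to $\vec{u}_1,\ldots,\vec{u}_{D-M}$, so it does not change the image under $\gilr_{\matr{W}}$. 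Orthonormality of the $\vec{u}_i$ then gives $\sqrt{\lambda_i}\langle \vec{u}_i,\log(\vec{x})\rangle_2 = z_i$, confirming the inverse formula (with the understanding that the expression stated in the lemma produces $\log(\vec{x})$, with the exponential and the blockwise rescaling left implicit).

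The main obstacle is the $\alpha=0$ case, where one must carefully use the simplex constraints $\sum_{j\in\mathcal{V}_m}x_j=\kappa_m$ to collapse the ambiguity introduced by $\ker(\matr{L}_{\matr{W}})$ into a single representative; the rest is straightforward linear algebra on the spectral decomposition.
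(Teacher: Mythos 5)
Your proof is correct and follows essentially the same route as the paper's (very terse) argument: isometry read off directly from the eigen-decomposition of $\alpha\matr{I}+\matr{L}_{\matr{W}}$, injectivity and surjectivity from orthonormality of the $\vec{u}_i$, and the inverse checked by composition. You merely fill in the details the paper omits — in particular the $\alpha=0$ case, where you correctly use the block constraints $\sum_{j\in\mathcal{V}_m}x_j=\kappa_m$ to kill the kernel ambiguity and note that the stated inverse formula implicitly requires exponentiation and blockwise rescaling to land back in $\mathcal{S}^D_{\matr{W}}$.
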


\begin{proof}
The proof of this Lemma follows directly from the eigen-decomposition. The mapping (\ref{ilr2}) is per definition an isometry and therefore also injective. Surjectivity follows as $\vec{u}_i$ form an orthonormal system. The inverse can easily be checked by plugging in one expression into the other; see also the next section.
\end{proof}

\begin{remark}[Fourier transform]
The mapping from the previous Lemma has an interesting interpretation for $\alpha = 0$. The projections $ (\langle \vec{u}_i ,\log(\vec{x}) \rangle )_{i \in 1,\ldots ,D-k}'$ onto the i-th eigenvector of $\matr{L}_{\matr{W}}$ can be interpreted as a discrete analogue of the Fourier transform on graphs to the frequencies $\lambda_i$, see \citet{shuman2016vertex}. The projections corresponding to small $\lambda_i$ coincide with the smooth part of a signal on a graph, whereas the projections corresponding to high $\lambda_i$ coincide with the higher frequency part of the signal $\log(\vec{x})$. 
The eigenvectors are the non-trivial minimizers of $\sum_{i \neq j} (\log(x_i) - \log(x_j))^2w_{ij}$, such that $\norm{\log(\vec{x})}_2 = 1$.
For compositional data all non-zero frequencies are one and therefore a distinction is not useful. 
\end{remark}

\begin{example}[The star graph]
Assume that, w.l.o.g., we are interested in modeling only the dependence between the variable $x_1$ linked to all others. We fix the weights $w_{1j} = w_{ji} = 1$ for all $j \neq 1$  and zero otherwise. This corresponds to putting weights 1 onto the log-ratios $\log(\frac{x_1}{x_2}),\log(\frac{x_1}{x_3}),\ldots ,\log(\frac{x_1}{x_D})$ in (\ref{fullInner}). The corresponding Laplacian-matrix $\matr{L}_{\matr{W}}$ is of the form
\begin{align*}
    \matr{L}_{\matr{W}} = 
    \begin{pmatrix}
D-1 &  -\vec{1}' \\
-\vec{1} &  \matr{I}_{D-1} ,
\end{pmatrix}
\end{align*}
where $\matr{I}_{D-1}$ is the D-1 dimensional unity matrix. This matrix has one zero eigenvalue, one eigenvalue equal to $D$, and $D-2$ eigenvalues equal to one. The (non-normalized) eigenvector to $D$ is $(D-1,-1,\ldots ,-1)'$ and one (non-orthogonal) system of eigenvectors to one is given by $(0,1,-1,0, \ldots ,0)',\\
(0,0,1,-1,\ldots ,0)',\ldots ,(0,\ldots ,1,-1)$; see \citet{grone1990laplacian}.

\end{example}

\section{A further extension} \label{extension}

We can generalize the theory developed for the spaces  $(\mathcal{S}^D_{\matr{W}},\langle \cdot, \cdot \rangle_{\matr{W},\alpha},\oplus_{\matr{W}},\odot_{\matr{W}})$ even further. One important extension that comes to mind is to allow the weights $w_{ij}$ to also take negative values, which can be seen as an extension for the following two important examples:

\begin{example} \label{ex2}
Assume that through expert knowledge we are interested in analyzing only certain weighted combinations, say $\sum_{j = 1}^D \log(\frac{x_i}{x_j}) w_{ij}$, for $i \in \{i_1,...,i_L \} \subset \{1,...,D\}$ and $w_{ij} \in \mathbb{R}$, not necessarily symmetric. Define a, rectangular, matrix $\widetilde{\matr{L}}_{\matr{W}} \in \mathbb{R}^{L \times D}$ by $(\widetilde{\matr{L}}_{\matr{W}})_{ii} = \sum_{j = 1}^D w_{ij}$ for  $i \in \{i_1,...,i_L \}$ and $(\widetilde{\matr{L}}_{\matr{W}})_{ij} = -w_{ij}$ , for $i \neq j$. The matrix $\matr{L}_{\matr{W}}:= \widetilde{\matr{L}}_{\matr{W}} (\widetilde{\matr{L}}_{\matr{W}})' $ is symmetric, positive semi-definite, has real valued entries and $\vec{1}$ in its null space. The natural mapping $\vec{x} \mapsto \widetilde{\matr{L}}_{\matr{W}} \log(\vec{x})$ can be used as a building block as explained below.
\end{example}

\begin{example} \label{ex3}
Assume that we are interested in analyzing combinations of certain subsets of ratios, say $\sum_{i,j = 1}^D \log(\frac{x_i}{x_j})w_{ijk}$ for  $k=1,...,K$. Then for each $k$ we can write the latter as
$\log(\vec{x})'\vec{w}_k$ with a vector $\vec{w}_k$ defined by $(\vec{w}_k)_i = \sum_{j=1}^D w_{ijk} - \sum_{j=1}^D w_{jik} $, for $i =1,...,D$. By definition of the combinations we directly have $\vec{w}_k' \vec{1} = 0$ for each $k$. If we define the rows of a rectangular matrix  $\widetilde{\matr{L}}_{\matr{W}} \in \mathbb{R}^{K \times D}$ as $\vec{w}_k$ we get $\widetilde{\matr{L}}_{\matr{W}} \log(\vec{x})  = (\sum_{i,j = 1}^D \log(\frac{x_i}{x_j})w_{ijk})_{k =1,..,K}$. Again, as for the previous example, the matrix $\matr{L}_{\matr{W}}:= \widetilde{\matr{L}}_{\matr{W}} (\widetilde{\matr{L}}_{\matr{W}})' $ is symmetric, positive semi-definite, has real valued entries and $\vec{1}$ in its null space and the natural mapping $\vec{x} \mapsto \widetilde{\matr{L}}_{\matr{W}} \log(\vec{x})$ can be used as a building block of an inner product as explained next.
\end{example}

We drop the assumption of non-negative weights $w_{ij}$ but assume $\matr{L}_{\matr{W}}$ to be positive semi-definite. The goal is to again be able to define scale-invariant spaces. We start by defining the equivalence relation $\log(\vec{x}) \sim \log(\vec{y}) \Leftrightarrow \log(\vec{x}) - \log(\vec{y}) \in \text{Ker}(\matr{L}_{\matr{W}})$, where $\text{Ker}$ denotes the kernel of the linear operator $\matr{L}_{\matr{W}}$. This equivalence relation induces the equivalence classes $[\log(\vec{x})]:=\{\log(\vec{y}) \mid \log(\vec{x}) \sim \log(\vec{y}) \}$. In the following we write $"\equiv"$ whenever two elements belong to the same equivalence class. From the theory of quotient spaces, see \citet{roman2005advanced}, we get that the space of equivalence classes, namely $\{[\log(\vec{x})] \mid \vec{x} \in  \mathbb{R}^{D}_+  \}$, which can be also written as $[\log(\vec{x})] = \{ \log(\vec{x}) + \vec{q} \mid \vec{q} \in \text{Ker}(\matr{L}_{\matr{W}}) \}$, is a linear vector space equipped with the operations $[\log(\vec{x})] + [\log(\vec{y})]:= [\log(\vec{x}) +\log(\vec{y}) ]$ and $\alpha [\log(\vec{x})] := [\alpha \log(\vec{x})]$ for any $\vec{x}, \vec{y} \in \mathbb{R}^D_+$ and $\alpha \in \mathbb{R}$. Therefore to define a scale invariant Hilbert space on $\mathcal{S}^D_{\sim}:= \{ \exp([\log(\vec{x})]) \mid \vec{x} \in \mathbb{R}^D_+ \}$ we set
\begin{align}
&\vec{x} \oplus_{\sim} \vec{y}:= \exp([\log(\vec{x})] + [\log(\vec{y})]) \label{equivplus} \\    
&  \alpha \odot_{\sim} \vec{x} := \exp(\alpha [\log(\vec{x})]) \label{equivtimes} \\
& \langle \vec{x}, \vec{y} \rangle_{\sim} := \langle [\log(\vec{x})], \matr{L}_{\matr{W}}[\log(\vec{y})] \rangle_2 \label{equivIP}
\end{align}
for $\vec{x},\vec{y} \in \mathcal{S}^D_{\sim}$ and $\alpha \in \mathbb{R}$. 


To prove that $\mathcal{S}^D_{\sim}$ is a Hilbert space we only need to check that (\ref{equivIP}) is indeed an inner product. Linearity follows trivially from the definitions (\ref{equivplus}) and (\ref{equivtimes}) as the log is taken in (\ref{equivIP}). The property $ \langle \vec{x}, \vec{x} \rangle_{\sim} \geq 0$ holds for any $\vec{x} \in \mathcal{S}^D_{\sim}$ as $\matr{L}_{\matr{W}}$ is assumed to be positive semi-definite. When $\langle \vec{x}, \vec{x} \rangle_{\sim}  = 0$ we get by taking the eigen-decomposition of $\matr{L}_{\matr{W}} = \matr{U}\matr{\Sigma}\matr{U}'$ and writing (\ref{equivIP}) as $\sum_{i = 1}^{D-M} (\langle \vec{u}_i,[\log(\vec{x})] \rangle_2 )^2$ that $[\log(\vec{x})]$ is orthogonal to the eigenvectors of the non-zero eigenvalues of $\vec{u}_i$, and therefore 
$[\log(\vec{x})] \in \text{Ker}(\matr{L}_{\matr{W}})$, $\log(\vec{x}) \in [\vec{0}]$. Finally note that scale invariance on any subgraph is still given by the definition of $\matr{L}_{\matr{W}}$.
For any $\vec{x} \in \mathbb{R}^D_+$ we can write $\log(\vec{x})= \sum_{i=1}^{D-M} \langle \vec{u}_i,\log(\vec{x}) \rangle \vec{u}_i + \vec{q}$, with $\vec{q}\in \text{Ker}(\matr{L}_{\matr{W}})$, and so $\log(\vec{x}) \in [\sum_{i=1}^{D-M} \langle \vec{u}_i,\log(\vec{x}) \rangle \vec{u}_i ]$.
To obtain an isometric one-to-one map as in (\ref{ilr2}) we can again define 
$\vec{x} \mapsto (\sqrt{\lambda_i} \langle \vec{u}_i,\log(\vec{x}) \rangle)_{i=1,...,D-M} \in \mathbb{R}^{D-M}$ for $ \vec{x} \in \mathcal{S}^D_{\sim}$ where $M$ is the number of zero eigenvalues and $\lambda_i$ a non-zero eigenvalue $\lambda_i$ to $\vec{u}_i$. This map is by definition an isometry from $(\mathcal{S}^D_{\sim},\langle\cdot,\cdot\rangle_{\sim},\oplus_{\sim},\odot_{\sim})$ onto $(\mathbb{R}^{D-k},\langle \cdot,\cdot \rangle_2,+,\cdot)$. The inverse is then given by $\vec{z} \mapsto \sum_{i=1}^{D-M} z_i \frac{1}{\sqrt{\lambda_i}} \vec{u}_i $. A weighted $\clr$ map is given as defined in Subsection \ref{clrAndIlr}.

Note that in the case of compositional data, $\matr{L}_{\matr{W}} = \matr{L}_{\mathcal{A}}$, (\ref{equivIP}) is equivalent to the Aitchison inner product (\ref{aitchisonInner}) and (\ref{equivplus}) as well as (\ref{equivtimes}) 
is proportional to the scaled versions of perturbation and powering in the former.  
By definition of the quotient space any two elements are the same if their difference lies in the kernel. For compositional data this means $\log(\vec{x}) \equiv \log(\vec{x}) + \alpha \vec{1}$ for any $\alpha \in \mathbb{R}$. The usual condition $\sum_{j = 1}^{D} x_j = 1$ as used in the definition of $\mathcal{S}^D$, is simply a matter of fixing a representative for each equivalence class $[\log(\vec{x})]$,
but others could be chosen as well. The same goes for the graphical extension represented so far in this paper. 

The quotient space nature of regular compositional data has been investigated extensively in \citet{barcelo2001mathematical}. The graphical approach presented here is thus a natural extension. Note, however, that if we allow also for negative weights then, depending on the kernel of the Laplacian, we might allow for more than invariance under rescaling on subgraphs, as the latter might contain more than the constant vectors $\vec{1}_{\mathcal{V}_m}$. 



\begin{example}[compositional data - normal distribution]
In classical CoDa $\vec{x}$ is assumed to be normally distributed if for some fixed $\matr{V}$ the $\ilr$-transformed data follows a multivariate normal, i.e. $\ilr_{\matr{V}}(\vec{x}) \sim \mathcal{N}(\vec{0},\matr{\Sigma})$, where $\matr{\Sigma} \in \mathbb{R}^{(D-1) \times(D-1)}  $ is a positive definite matrix. $\ilr_{\matr{V}}(\vec{x})$ can be written as $\matr{V}'\matr{L}_{\mathcal{A}} \log(\vec{x})$, and so we conclude that $\log(\vec{x})$ follows a degenerate multivariate normal distribution with covariance matrix $(\matr{L}_{\mathcal{A}}\matr{V} \matr{\Sigma}^{-1} \matr{V}'\matr{L}_{\mathcal{A}})^{+}$, $\log(\vec{x}) \sim \mathcal{N}(\vec{0},(\matr{L}_{\mathcal{A}}\matr{V} \matr{\Sigma}^{-1} \matr{V}'\matr{L}_{\mathcal{A}})^{+}) $, see also the next section; 
the superscript $+$ indicates the Moore-Penrose inverse, \citet{ben2003generalized}.
The matrix $\matr{L}:=\matr{L}_{\mathcal{A}}\matr{V} \matr{\Sigma}^{-1} \matr{V}'\matr{L}_{\mathcal{A}}$ is symmetric and positive semi-definite. Furthermore, $\vec{1}$ is in its nullspace. Any symmetric matrix $\matr{L}$ with $\matr{L}\vec{1} = \vec{0}$ can be decomposed into $\matr{L} = \diag{(\matr{W} \vec{1})}-\matr{W}$ for a symmetric matrix $\matr{W}$, with possibly negative entries, $w_{ij} = -b_{ij}$ for $i\neq j$, and zero diagonal. Therefore
\begin{align*}
    \log(\vec{x})' \matr{L} \log(\vec{x}) &=   \log(\vec{x})'\diag{(\matr{W} \vec{1})}-\matr{W} \log(\vec{x}) \\
    &= \sum_{i=1}^D \log(x_i)^2 \sum_{j = 1}^D w_{ij} - \sum_{i,j} \log(x_i) \log(x_j) w_{ij} \\
    & = \frac{1}{2}\big( \sum_{i=1}^D \log(x_i)^2 \sum_{j = 1}^D w_{ij} + \sum_{j=1}^D \log(x_j)^2 \sum_{i = 1}^D w_{ij} \big) - \sum_{i,j} \log(x_i) \log(x_j) w_{ij} \\
    &= \frac{1}{2} \sum_{i,j = 1}^D w_{ij}( \log(x_i)^2+\log(x_j)^2-2 \log(x_i)\log(x_j)) \\
    &= \frac{1}{2} \sum_{i,j=1}^D \log\bigg(\frac{x_i}{x_j}\bigg)^2 w_{ij}.
\end{align*}
The weights of $\frac{1}{2} \sum_{i,j=1}^D \log\big(\frac{x_i}{x_j}\big)^2 w_{ij}$ can now be negative which means that $ \log(\vec{x})' \matr{L} \log(\vec{x}) $ becomes for a negative weight smaller the bigger its corresponding ratio gets. 
\end{example}

\section{The weights \texorpdfstring{$ w_{ij}$}{} }

We only consider the case of positive weights in this section. The choice of weights $w_{ij}$ depends on the  application in mind. For example, when one deals with variables with spatial dependence it can make sense to choose the weights and thus the graph according to the spatial position. In another setting, if expert-knowledge leaves us to believe that only certain pre-chosen ratios are relevant for the statistical analysis, we set the weights for the latter to one and all others to zero. 
If there is no knowledge of the ratios or if one is interested in the ratios which in relation to the data are important it seems appropriate to assume that the data follows a distribution  with (improper) density
\begin{align} \label{dataDens}
\frac{1}{\sqrt{(2\pi)^{D} \mid \alpha\matr{I} +\matr{L}_{\matr{W}}^+\mid _{+} }}\exp(- \frac{1}{2}\norm{\vec{x}}_{\matr{W},\alpha}^2 ),
\end{align}
where $\mid \cdot \mid_{+}$ denotes the pseudo determinant, see \citet{minka1998inferring}. 
 For $\alpha = 0$, (\ref{dataDens}) is understood as a density defined on a subspace of $\mathbb{R}^D_{+}$.
 For $\alpha>0$, $\mid \cdot \mid_{+}$ collapses with the usual determinant $\mid \cdot \mid$. 

In the context of graphical models, it has been shown for multivariate data $\vec{u} \sim \mathcal{N}(\vec{0},\matr{\Sigma})$, with positive definite $\matr{\Sigma}$, that the precision matrix $\matr{\Sigma}^{-1}$ reveals the graph structure in form of conditional independence \citet{lauritzen1996graphical}. In \citet{yuan2007model} and  \citet{friedman2008sparse}, a penalized  log-likelihood problem is solved to find an estimate  $\hat{\matr{\Sigma}}$
\begin{align} \label{graphLasso}
    \hat{\matr{\Sigma}}^{-1} := \argmin_{\substack{\matr{A}\in \mathbb{R}^{d \times d}\\ \matr{A}' = \matr{A}, \matr{A} \, p.d}} \log(\mid \matr{A} \mid)- \tr{\big(\matr{A} (N^{-1} \matr{U}' \matr{U})\big)} + \lambda \sum_{i,j = 1}^D \mid A_{ij} \mid ,
\end{align}
where p.d is short for positive definite, $\matr{U} \in \mathbb{R}^{N \times D}$ is the data matrix and $\lambda$ is a parameter controlling the sparsity of the entries of $A_{ij}$.
It is natural to extend (\ref{graphLasso}) to the compositional graph case for the data matrix $\matr{X} \in \mathbb{R}_{+}^{N \times D}$ sampled according to (\ref{dataDens}). If we allow $\alpha > 0$ we can solve 
\begin{align} 
    &\min_{\substack{ \alpha > 0, \, \matr{L} = \matr{L}' \in \mathbb{R}^{D \times D} }} \log(\mid \alpha \matr{I} + \matr{L} \mid) -\tr{\big(\big(\alpha \matr{I} + \matr{L}\big)  (N^{-1}\log(\matr{X}') \log(\matr{X})\big) \big)} + \lambda \sum_{\substack{i,j = 1 \\ i\neq j}}^D \mid L_{ij} \mid \label{probAbs1} \\
    & \qquad \qquad \qquad \text{ s.t. } \matr{L} \vec{1} = \vec{0}, \, \,  L_{ij} \leq 0 \text{ for }  \forall i \neq j \, \, \, \text{tr}(\matr{L})  = D-1 , \label{probAbs2}
\end{align}
where the $\log$ is applied coordinatewise and $\lambda$ is a fixed sparsity parameter. Problem (\ref{probAbs1})-(\ref{probAbs2}), rewritten only in terms of weights $L_{ij}$, was derived in \citet{lake2010discovering} in a non-compositional context. It is a convex problem which can be  efficiently solved. The condition $\text{tr}(\matr{L})  = D-1 $ ensures compatibility with the compositional case, $\matr{L}_{\mathcal{A}}$. The first two conditions mean that $\matr{L}$ can be decomposed as (\ref{mainform}).

If finding an estimator with $\alpha = 0$ is the goal, then as the pseudo-determinant is a discontinuous function, see \citet{holbrook2018differentiating}, this would lead to a discontinuous optimization problem that is hard to solve. 
For compositional data an approach for finding an underlying graph structure has been proposed in \citet{kurtz2015sparse}. After $\clr$ transforming the data, $\vec{t}_n := \clr{(\vec{x}_n)}$, for $n=1,\ldots,N$, and collecting the latter row-wise in the data matrix $\matr{T}$,
 one solves either (\ref{graphLasso}) with the new data $\vec{t}_n$, or a series of problems derived from the non-compositional setting in \citet{meinshausen2006high}:
\begin{align}  \label{MBmethod}
\min_{\vec{\beta} \in \mathbb{R}^{D-1}} \frac{1}{N}\norm{\vec{T}_{i} - \matr{T}^{-i}\vec{\beta}_i}_2^2    + \lambda \sum_{j=1}^{D-1} \mid (\vec{\beta}_i)_{j} \mid
\end{align}
for $i =1,\ldots,D$ and $\lambda \geq 0$, where $\vec{T}_{i}$ denotes the $i$-th column of the data matrix $\matr{T}$ and $\matr{T}^{-i}$ the latter with the $i$-th column deleted. 
To obtain a weight matrix a post-processing step is applied by setting each weight to $\Tilde{w}_{ij}:= $
$\mid \frac{1}{2}(\beta_{ij}+\beta_{ji})\mid$ for $i\neq j$ and $\Tilde{w}_{ii} := 0$ for $i=j$.
Various other methods for finding meaningful weights have been developed in the area of graph signal processing in a non-compositional context, see \citet{ dong2016learning}, \citet{kalofolias2016learn} and \citet{egilmez2017graph}. Following \citet{kalofolias2016learn}, a slightly reformulated compositional version of the 
proposed method would be
\begin{align} 
      \min_{ \substack{ \matr{W} = \matr{W}' \in \mathbb{R}^{D \times D} \\ \diag{\matr{W}} = \vec{0} \\ w_{ij}\geq 0 }} & \sum_{i,j = 1}^{D} \bigg(\frac{1}{N} \sum_{n=1}^N \log \bigg(\frac{x_{ni}}{x_{nj}}\bigg)^2\bigg) w_{ij} - \alpha \sum_{i=1}^D \log\bigg(\sum_{j=1}^D w_{ij}\bigg)+ \beta \sum_{i, j = 1}^{D} w_{ij}^2  \label{kalo} 
\end{align}
for two positive parameters $\alpha$ and $\beta$ controlling the sparsity of the resulting graph. 

The optimization problem (\ref{kalo}) leads to weights, $w_{ij}$, that are smaller the bigger the variance of the log-ratio $\log(\frac{x_i}{x_j})$ gets, and vice-versa. Something similar holds for the weights defined by (\ref{MBmethod}). This is the contrary of what we are interested in when defining an inner product and subsequently a norm that measures similarity as described above. We are rather interested in weights that put emphasis on log-ratios that explain a lot of variance, as we believe that those are of major interest. Log-ratios which have low variance should get small weights.
In the following we assume that each data point $\vec{x_n}$ has been replaced by $\clr(\vec{x_n})$ and that $\frac{1}{N}\sum_{n=1}^N\log(\vec{x_n}) = \vec{0}$ holds -- this can be achieved by centering each row and column of the log-transformed data matrix.
To get weights that put emphasis on log-ratios which are important in explaining the variance of the data set, we propose to use the same method as in \citet{greenacre2019variable} in a similar context with an additional step:
\begin{algorithm}
\caption{}
\label{alg:greenPlus}
\begin{algorithmic}[1]
\Statex \textbullet~\textbf{Start:} Compute the data matrix $\matr{Z} \in \mathbb{R}^{N\times \frac{D(D-1)}{2} }$ of all possible log-ratios. Denote $\Gamma$ the set of all possible edges, and set $\mathcal{E} = \emptyset$, $\matr{W}= \matr{0}$, $t = 0$ as well as $R_0 = 0$.
\State If $t < D$ go to 1, else stop.
\State For each $(k,l) \in \Gamma$ calculate the R-squared, $R^2(k,l)$, of the data points $\clr(\vec{x}_n)$ regressed onto the log-ratios $\{ \log(\frac{x_{ni}}{x_{nj}}) \mid (i,j) \in \mathcal{E} \cup \{(k,l)\} \}$.
\State Set $\mathcal{E}:=\mathcal{E} \cup (k_0,l_0)$
    for $(k_0,l_0) = \argmax_{(k,l) \in \Gamma} R^2(k,l)$,  $\Gamma = \Gamma \setminus \{(k_0,l_0)\} $, $R_{t}=R^2(k_0,l_0)$,  and $ w_{k_0,l_0} = w_{l_0,k_0} = R_t-R_{t-1}  $.
\State Set $t=t+1$ and go back to 0.
\end{algorithmic}
\end{algorithm}

\section{Two examples}

To illustrate the proposed framework we look at two real data examples. In both examples we will use Algorithm (\ref{alg:greenPlus}) to obtain the weights and a graph.

\subsection{Nugent data set}

In this first example we look at vaginal bacterial communities data. As this data set consist of microbiome data we will treat it as compositional, see \citet{gloor2017microbiome}, equivalently to \citet{lubbe2021comparison}. 
This data set consists
of 388 samples of 16s ribosomal RNA sequences recorded as 84 different operational taxonomic units (OTUs), see
\citet{ravel2011vaginal}. As a preprocessing step only OTUs present in at least 5\% of the samples were kept and subsequently
the 82\% of zeros of the remaining resulting data matrix $\matr{X} \in \mathbb{R}^{388 \times 84}_+$ were replaced by a value of 0.5, compare to \citet{lubbe2021comparison}.
Additionally to $\matr{X} $, a response $\vec{y} \in \{0,\ldots,10\}^{388}$ of 11 categories was recorded indicating the 
risk of bacterial vaginosis, from low risk, given by 0, to high risk, given by 10.

We scale $\vec{y}$ and center row- as well as column-wise the log-transformed data matrix $\log(\matr{X} )$.
To obtain a graph, the stepwise algorithm as explained in Algorithm~\ref{alg:greenPlus} is employed. This leads to a sequence
of $D-1$ edges $\mathcal{E} = \{e_1 =(i_1,j_1),\ldots,e_{D-1}=(i_{D-1},j_{D-1}) \}$, corresponding weights $w_{e_1},\ldots,w_{e_{D-1}}$, and a 
Laplacian matrix $\matr{L}_{\matr{W}}$. 

Figure \ref{fig:cumVarEx3} compares the cumulative explained variance of principal components (PCs) in the classical compositional case, see for example \citet{aitchison1983principal} or  \citet{filzmoser2009principal}, the continuous black line,
to the cumulative explained variance of the log-ratios belonging to the edge set $E$, dotted black line, as well as to
the projected data, by (\ref{ilr2}) -- after possible reordering, dashed black line.
All three methods yield very similar results. The PC directions by definition explain the most cumulative variance.
The dashed as well as the dotted line, which overlap in this example, are very close to
the continuous black line, which indicates that this choice of weights $w_{e_1},\ldots,w_{e_{D-1}}$
only leads to a marginal loss of information. 
Instead of using simply log-ratios corresponding to the edge set $\mathcal{E}$,
the advantage
by defining a Laplacian matrix $\matr{L}_{\matr{W}}$ is that a Hilbert space is constructed.

\begin{figure}[t]
    \centering
    \includegraphics[scale = 0.50]{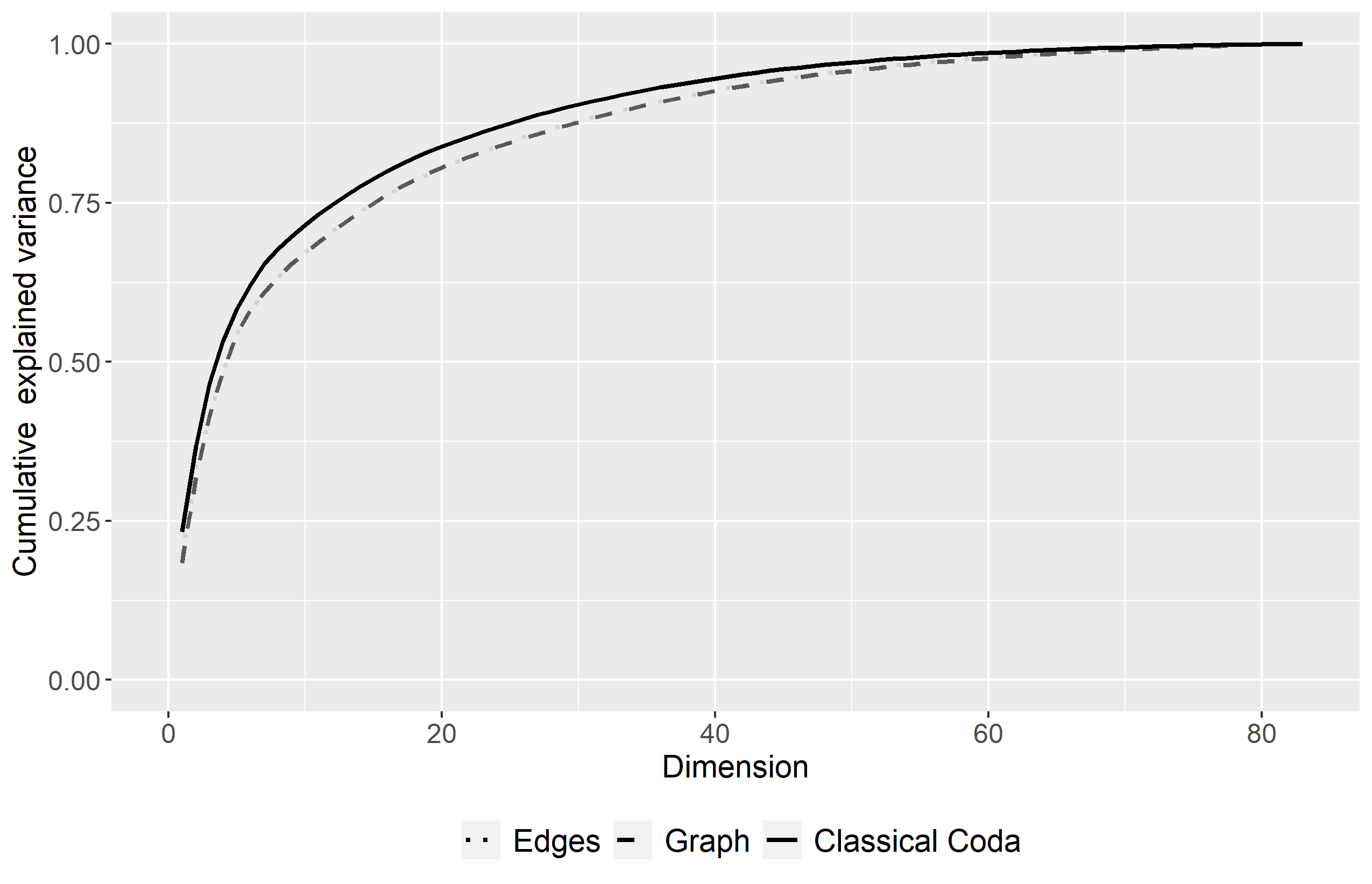}
    \caption{Plot of the cumulative explained variances. The black line shows the cumulative explained variance of regular compositional PCA dependent on the number of components. The dashed/dotted ones, overlapping, show the 
    R-squared for each step of Algorithm \ref{alg:greenPlus} as well as the 
    R-squared for the coordinates of map (\ref{ilr2})  dependent on the number of such coordinates, after possible reordering.}
    \label{fig:cumVarEx3}
\end{figure}

The framework presented in this paper can be used to get different interpretations for
a regression model in CoDa. The idea is to first fit a regression model in the
Aitchison geometry to the data $(\matr{x},\vec{y})$, i.e. 
$y_n =  \langle \vec{a},\vec{x}_n \rangle_{A} + \epsilon_n$, for $n = 1,...,388$, with $\epsilon_n \sim \mathcal{N}(0,\sigma^2)$
and then
find a coefficient vector $\vec{b}$ such that $\langle \vec{a},\vec{x} \rangle_{A} \approx
\langle \vec{b},\vec{x} \rangle_{\matr{W}} $. The coefficient vector $\vec{b}$ is chosen as a solution of  $ \min_{\vec{b}} \norm{ \matr{L}_{A} \log(\vec{a}) -  \matr{L}_{\matr{W}} \log(\vec{b}) }^2_2 $, so
 $ \vec{b} = \exp(\matr{L}_{\matr{W}}^+ \matr{L}_{A} \log(\vec{a}) ) $.
For any fixed $k =1,\ldots,D-1$ we can compute the coefficient vector $\vec{b}^{k}:=\exp(\matr{L}_{\matr{W}^k}^+ \matr{L}_{A} \log(\vec{a})) $ where 
$\matr{W}^k $ is the weight matrix induced by the weights $w_{e_1},\ldots,w_{e_{k}}$ leading to $D-1$ different 
linear models $f_k(\vec{x}):=\langle \vec{b}^k,\vec{x} \rangle_{\matr{W}^k}$. As $ f_k(\vec{x}) = 
\sum_{i,j = 1}^D  \log(\frac{x_i}{x_j})\log(\frac{b^k_i}{b^k_j}) w^k_{ij} = \sum_{i < j}  \log(\frac{x_i}{x_j}) 2 \log(\frac{b^k_i}{b^k_j})w^k_{ij}$
 we can visualize for a fixed $k$ this linear model, which can be thought of as a signed weighted sum of log-ratios, as a graph where the edges are 
induced by $\widetilde{w}_{ij}^k := 2 \log(\frac{b^k_i}{b^k_j}) w^k_{ij} \neq 0$.
To obtain a $k$ that is appropriate we split for each fixed $k$ the data set $(\matr{X},\vec{y})$ a hundred times randomly 
into a training and a test set, compute on the training set the zeroSum solution $\vec{a}$, see \citet{lin2014variable}, 
compute $\vec{b}^k$ and calculate the mean squared error (MSE) 
for the zeroSum model and $f_k(\vec{x})$ on the test set. We use a zeroSum model, as in \citet{lubbe2021comparison}, to get a sparse coefficient vector $\vec{a}$, only retaining important variables for explaining the response - which ultimately leads to a lower mean MSE than the full one.
Figure~\ref{fig:mseplotex3} shows for each $k$ and both methods the mean of the MSE over all hundred repetitions.
\begin{figure}[t]
    \centering
    \includegraphics[scale = 0.45]{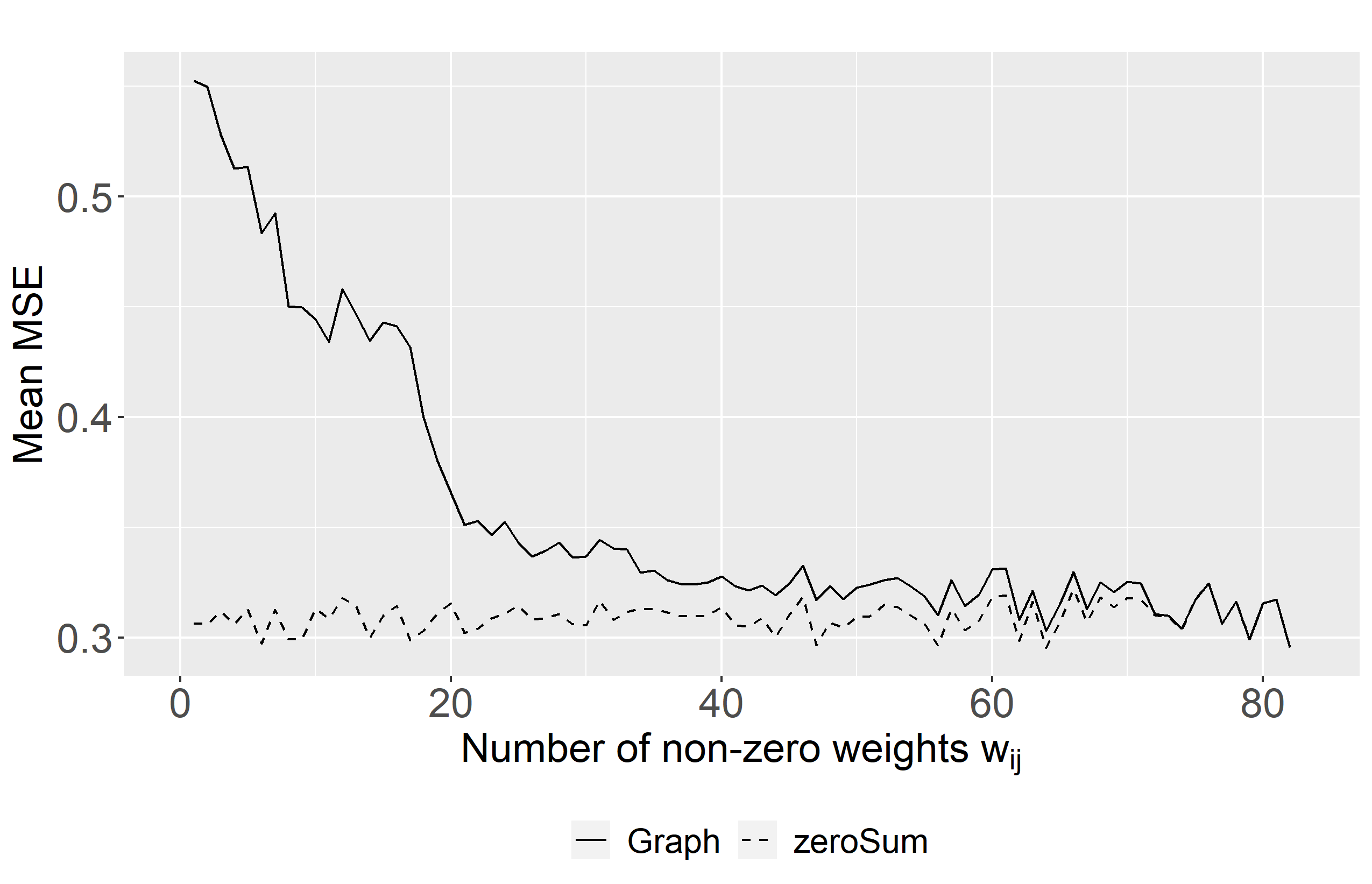}
    \caption{Plot of the mean of the mean squared errors (MSEs) comparing the regression model of the graph method to the zeroSum model as a baseline over one hundred repetitions each. For the graph method the horizontal axis corresponds to the non-zero weights found by Algorithm~\ref{alg:greenPlus} in each step.}
    \label{fig:mseplotex3}
\end{figure}

The value $ k = 25$ seems to be a good trade-off between low MSE in median and sparse graph. For the latter we show in Figure \ref{fig:ex3regplot}
\begin{figure}[t]
    \centering
    \includegraphics[scale = 0.35]{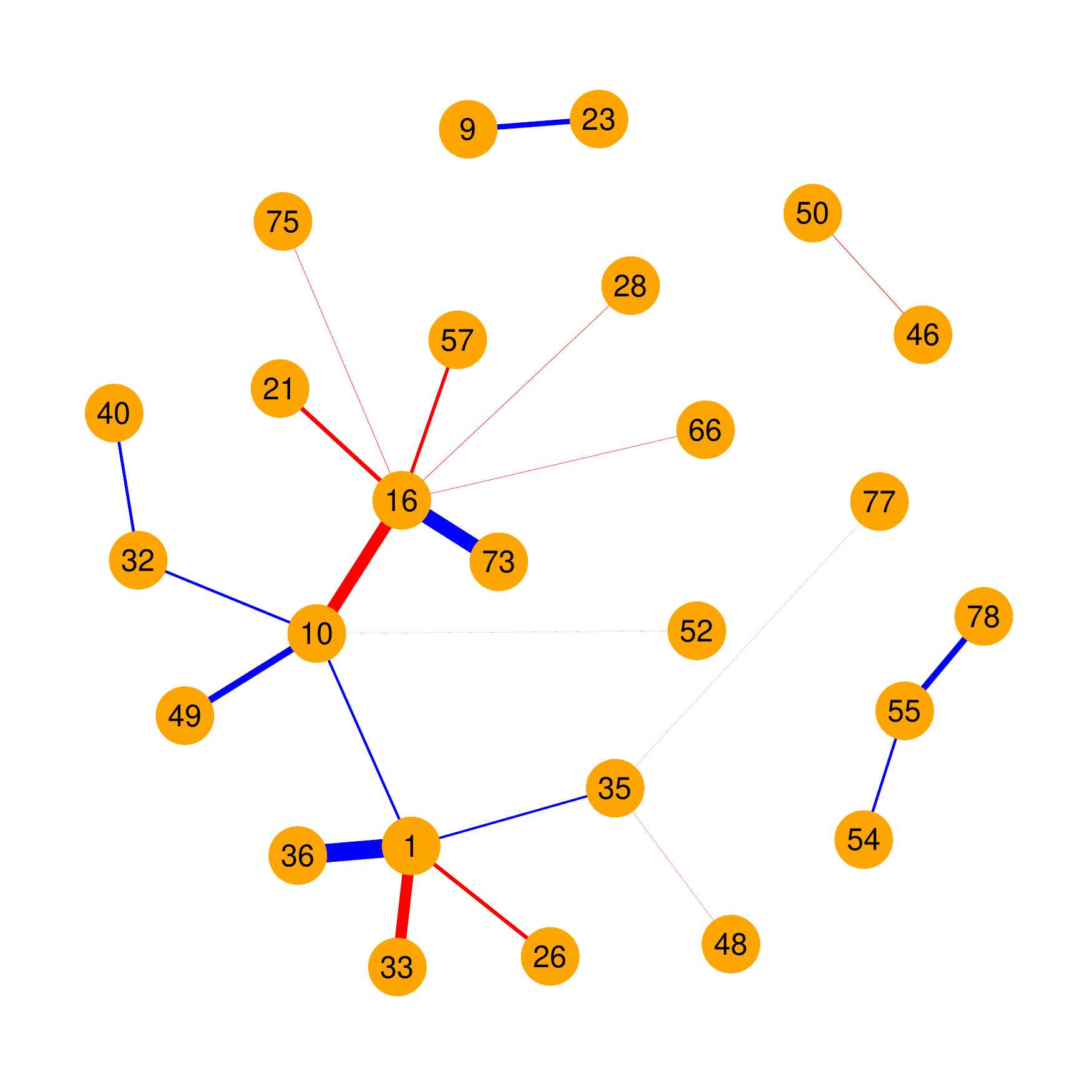}
    \caption{Visualization of the chosen regression model, $k=25$, $\sum_{i<j}^D \log\big(\frac{x_i}{x_j}\big) 2 \log\big(\frac{b^k_i}{b^k_j}\big)w^k_{ij}$. The edge width is proportional to the size of $\sigma_{ij} 2 \log \big(\frac{b^k_i}{b^k_j}\big)w^k_{ij}$.
    It is red when the latter is positive and blue otherwise.}
    \label{fig:ex3regplot}
\end{figure}
 the graph induced by 
$\widetilde{w}_{ij}^k := 2 \log(\frac{b^k_i}{b^k_j}) w^k_{ij} $.  
The edge thickness is proportional to 
$ \sigma_{ij} \mid \widetilde{w}_{ij}^k \mid$,
where $\sigma_{ij}$ is the standard deviation of $\log(\frac{x_i}{x_j})$;
red edges indicate $\widetilde{w}_{ij}^k > 0$, blue ones $\widetilde{w}_{ij}^k < 0$.
We can see that there are four non-connected sub-graphs. Table \ref{table:correspondingNbrs} shows the phylotypes corresponding to the numbers in Figure \ref{fig:ex3regplot}. In the latter we can see, for example, that Lactobacillus\_crispatus (16) and Lactobacillus\_iners (1) are central phylotypes in the graph with many connections to others,
and thus they seem to have a big effect in governing the risk of a disease. This is in accordance with findings in \citet{ravel2011vaginal}, where higher proportions of phylotypes
of Lactobacillus were associated with lower risk. We can see in Figure \ref{fig:ex3regplot} that the connections between
Lactobacillus\_crispatus (16) and  Atopobium\_vaginae (73) as well as between Lactobacillus\_iners (1) and Prevotella\_timonensis (36)
are comparably big. Atopobium\_vaginae (73) and  Prevotella\_timonensis (36) are two of multiple phylotypes associated with higher risk, see \citet{ravel2011vaginal}, which 
is in agreement with the display in Figure \ref{fig:ex3regplot} as a blue edge indicating a negative coefficient. Note that the log-ratio between Ureaplasma\_parvum\_serovar (10) and Lactobacillus\_crispatus (16) seems to have a big influence on the risk as well. A smaller effect on the risk can be seen in Figure \ref{fig:ex3regplot}, for example, for the log-ratio between Anaerococcus\_lactolyticus (55) and Gardnerella\_vaginalis (78).

\begin{table}[ht]
\centering
\begin{tabular}{|r|l|l|l|}
  \hline
 1 & Lactobacillus\_iners & 46 & Anaerococcus\_hydrogenalis \\ 
   9 & Veillonella\_montpellierensis & 48 & Prevotella\_bivia \\ 
   10 & Ureaplasma\_parvum\_serovar & 49 & Prevotella\_buccalis \\ 
   16 & Lactobacillus\_crispatus & 50 & Prevotella\_disiens \\ 
   21 & Lactobacillus\_vaginalis & 52 & Finegoldia\_magna \\ 
   23 & Dialister\_micraerophilus & 54 & Lactobacillus\_acidophilus \\ 
   26 & Lactobacillus\_gasseri & 55 & Anaerococcus\_lactolyticus \\ 
   28 & Anaerococcus\_tetradius & 57 & Lactobacillus\_jensenii \\ 
   32 & Peptostreptococcus\_anaerobius & 66 & Streptococcus\_agalactiae \\ 
   33 & Corynebacterium\_tuscaniense & 73 & Atopobium\_vaginae \\ 
  35 & Prevotella\_amnii & 75 & Anaerococcus\_obesiensis \\ 
   36 & Prevotella\_timonensis & 77 & Fenollaria\_massiliensis \\ 
   40 & Aerococcus\_christensenii & 78 & Gardnerella\_vaginalis \\ 
   \hline
\end{tabular}

\vspace*{2mm}
\caption{Corresponding phylotypes to the vertex numbers in Figure \ref{fig:ex3regplot}}
\label{table:correspondingNbrs}
\end{table}

\subsection{Kola data set}

As a second example we analyze the Kola data set of the moss layer which is available in the R package mvoutlier \citet{mvoutlier}. It is comprised of 31 chemical concentrations of the elements Ag, Al, As, B, Ba, Bi, Ca, Cd, Co, Cr, Cu, Fe, Hg, K, Mg, Mn, Mo, Na, Ni, P, Pb, Rb, S, Sb, Si, Sr, Th, Tl, U, V and Zn, sampled at 598 different locations in the the west of the Kola peninsula and the northernmost part of Finland and Norway, see \citet{reimannenvironmental}. 
All data are collected in the data matrix $\matr{X} \in  \mathbb{R}^{598 \times 31}$ which we center row- and column-wise after log-transforming each entry. As in the previous example we use  Algorithm~\ref{alg:greenPlus} to find weights $w_{ij}$. In Figure~\ref{fig:ex3CoordMaps} we display three selected coordinates of the map (\ref{ilr2}) on the left, and on the right the corresponding weights $\vec{u}_i$. The most upper plot on the right indicates that the according coordinate displayed on its left is influenced mainly by two elements, Nickel (Ni) and Sulfur (S), which are also connected in the graph.
We can see high values around the cities of  Monchegorsk and Nikel/Zapolyarny tapering off radially from the cities. 
At both locations there is a huge industry with nickel refining facilities, and the moss layer collects these emissions, which seem to be highly visible in the ratio of Ni and S. 
The middle row of Figure~\ref{fig:ex3CoordMaps} displays high values around the area of the Khibiny Massif with an anomaly in Norway. The elements involved are Sodium (Na), Strontium (Sr), Titanium (Ti) and to some extent Boron (B) and Chromium (Cr). Supposedly this map can be related to alkaline intrusion effects.
The right plot in the last row of Figure~\ref{fig:ex3CoordMaps} shows that the corresponding coordinate displayed on its left is mainly highly influenced by effects of the elements  Vanadium (V), Magnesium	(Mg), Aluminium	(Al), Arsenic	(As) and Molybdenum (Mo). Relatively high values of this coordinate can be seen in Russia,
along the main road connections from Murmansk on the coast, down to Monchegorsk and Apatity, and towards the West to Kovdor. Elements such as V and Al could typically indicate pollution from dust, probably caused by ore transportation.

\begin{figure}[t]
    \centering
    \hspace*{-1.5cm} 
    \includegraphics[scale = 0.8]{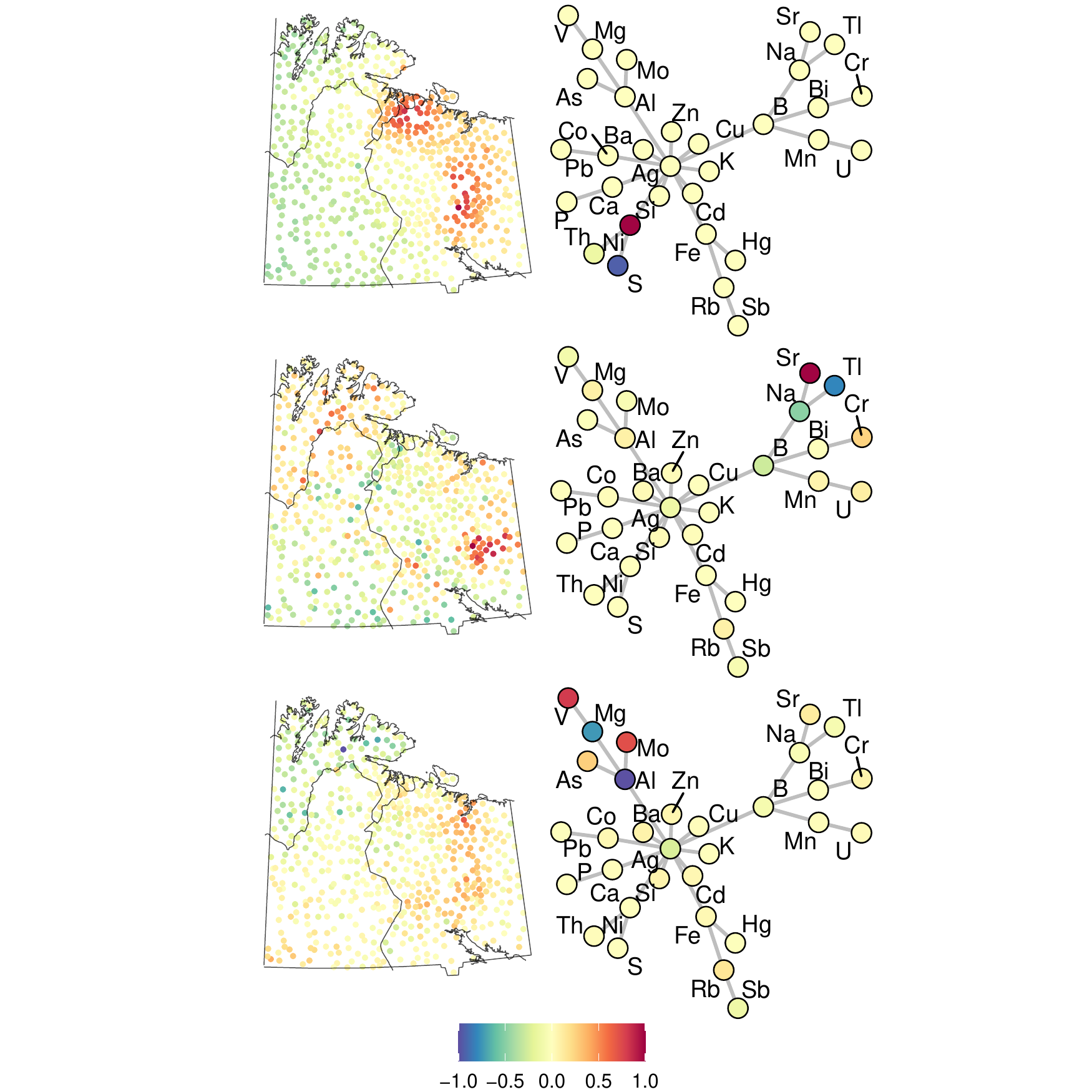}
    \caption{The left column shows the scaled GILR2-map (\ref{ilr2}) such that its values lie in the interval $[-1,1]$. The right column shows, again scaled to the $[-1,1]$ interval, in each node the corresponding entry of the vector $\vec{u}_i$ of (\ref{ilr2}). The underlying graph structure was found by Algorithm~\ref{alg:greenPlus}.}
    \label{fig:ex3CoordMaps}
\end{figure}

\section{Conclusions}

This paper is an attempt to link compositional data analysis with the concepts of signal processing on graphs.
We started from the observation that the Aitchison norm is equally influenced by
all pairwise log-ratios between the variables, regardless if these are meaningful from the problem context or not. Modifying the Aitchison norm by putting a non-negative weight onto each (squared) pairwise log-ratio led to a norm that gives different log-ratios a different impact. With this change comes a semi-inner product induced by the Laplacian matrix, well known in graph theory, and a different geometry that still satisfies the most important properties for compositional data analysis, such as scale invariance and compositional coherence. This radically differs from previous approaches that did consider weighting, for example by constructing an $\ilr$ map with the first coordinate being a weighted sum of log-ratios, and the others being such that they form an appropriate basis, without changing the underlying (Aitchison) geometry. 

The framework we propose is very flexible and it includes many extensions that allow for different additional modeling choices, such as accounting for absolute effects of variables to only considering a low number of interesting subsets of weighted balances -- that is, coordinates that consist of a few variables in the numerator as well as denominator. We could show that such modeling choices do not lead to any loss in interpretability and that in fact mappings, such as $\clr$ and $\ilr$, which are of central importance in compositional data analysis, have an analogue in the graph setting.
To find appropriate weights we resorted to a stepwise algorithm that selects log-ratios and their weighting, in a sequential manner, based on explained variance of the whole data set.

To show its utility, we applied the proposed methodology to real data sets. In the first example we looked at a regression problem by modeling the risk of getting bacterial vaginosis as being depended on different  phylotypes. We showed that the graph perspective can be used to gain different interpretable insights by first reducing the dimension and then visualizing the resulting model as a graph. The risk was mainly driven by only a few central phylotypes having strong connections to others. 
In the second example, we looked at the Kola data set which consists of geographically dependent chemical compositions. Choosing the weights in a data dependent way allowed us to construct different interpretable orthogonal maps as given by the theory developed in this paper. The effect of these maps could be visualized as graphs with nodes colored according to the information of each element they carried. 

In the future we intend to look into the performance and possible extensions of the methods presented in this paper to different settings such as classification or regression. Also, we plan to investigate further the problem of finding weights under different constructural constraints and compare these methods with already existing ones in the signal processing field. Additionally, it would also be interesting to investigate deviations from the assumptions in (\ref{dataDens}) to more robust cases.

\section*{Statements and Declarations}

The authors declare that they have no conflict of interest.

\begin{appendices}

\section{Proofs}

\begin{proof}[Proof of Lemma \ref{graphCodaInnerProd}]

The logarithm maps $(\mathbb{R}^D_+,\oplus,\odot)$ one-to-one into $(\mathbb{R}^D,+,\cdot)$, which means that $(\mathbb{R}^D_+,\oplus,\odot)$ is also a vector space. Linearity in each argument and conjugate symmetry of $ \langle \cdot, \cdot \rangle_{\matr{W},\alpha}$ follows from linearity of $\matr{d}_{\matr{W}^{\frac{1}{2}}}$ resp. $\matr{d}_{\matr{W}^{\frac{1}{2}}}'$ and the linearity of the standard inner product on $\mathbb{R}^D$ and $ \mathbb{R}^{\mid E \mid}$. $ \langle \vec{x}, \vec{x} \rangle_{\matr{W},\alpha} \geq 0 $ holds by definition. When $ \langle \vec{x}, \vec{x} \rangle_{\matr{W},\alpha} = 0$, since $\matr{L}_{\matr{W}}$ is positive semi-definite, we can conclude $\alpha \langle \log(\vec{x}), \log(\vec{x})  \rangle_{2}=0$ and therefore $\log(\vec{x}) = 0$, $\vec{x} = \vec{1}$.

As $(\mathcal{S}^D_{\matr{W}},\oplus_{\matr{W}},\odot_{\matr{W}})$ can be seen as $M$ separate D-part simplices, $\mathcal{S}^{m}$, $m = 1,\ldots,M$ - not taking into account any inner products for the moment -  we conclude that $(\mathcal{S}^D_{\matr{W}},\oplus_{\matr{W}},\odot_{\matr{W}})$ is also a vector space. 
What is left is to show is that from $ \langle \vec{x}, \vec{x} \rangle_{\matr{W},0} = 0$ we can conclude $\vec{x} = \sum_{m = 1}^M \frac{\kappa_m}{\mid \mathcal{V}_m \mid}\vec{1}_{i \in \mathcal{V}_m}$. Using the eigenvalue decomposition of $\matr{L}_{\matr{W}} = \matr{U} \Sigma \matr{U}'$ we can write, 
$ \langle \vec{x}, \vec{x} \rangle_{\matr{W},0} =   \log(\vec{x})'\matr{U}  \Sigma  \matr{U}'\log(\vec{x}) 
= \sum_{j = 1}^D \lambda_j (\langle \vec{u}_j ,\log(\vec{x}) \rangle)^2$, with $\lambda_j \geq 0$ and $\lambda_j = 0$ iff an eigenvalue of $\matr{L}_{\matr{W}} $ is zero. So $\langle \vec{u}_j ,\log(\vec{x}) \rangle$ must be zero whenever $\lambda_j >0$ which is equivalent to saying that $\log(\vec{x})$ is in the kernel. We know from Lemma \ref{LaplacianLemma} that the kernel of $\matr{L}_{\matr{W}}$ is spanned by $\vec{1}_{i \in \mathcal{V}_m}$, so there exist $c_1,\ldots,c_m \in \mathbb{R}$ s.t. $\log(\vec{x}) = \sum_{m=1}^M c_m \vec{1}_{i \in \mathcal{V}_m}$. From the latter we get $\log(\vec{x}_{\mathcal{V}_m}) = c_m \vec{1}$, where $\vec{x}_{\mathcal{V}_m}$ denotes the entries of $\vec{x}$ with index in $\mathcal{V}_m$, and thus $\vec{x}_{\mathcal{V}_m} = \exp(c_m) \vec{1}$. As, by definition, $\sum_{i \in \mathcal{V}_m} x_i = \kappa_m$ holds, we conclude $\vec{x}_{\mathcal{V}_m} =\frac{\kappa_m}{\mid \mathcal{V}_m \mid} \vec{1}$, $\vec{x} = \sum_{m = 1}^M \frac{\kappa_m}{ \mid \mathcal{V}_m \mid}\vec{1}_{i \in \mathcal{V}_m}$.
\end{proof}

\begin{proof}[Proof that  $\norm{ \cdot }_{q,\alpha}$ is a norm on $(\mathbb{R}^D_+,\oplus,\odot)$ resp. $(\mathcal{S}^D_{\matr{W}},\oplus_{\matr{W}},\odot_{\matr{W}})$]
As discussed in the proof of Lemma \ref{graphCodaInnerProd} both $(\mathbb{R}^D_+,\oplus,\odot)$ resp. $(\mathcal{S}^D_{\matr{W}},\oplus_{\matr{W}},\odot_{\matr{W}})$ are vector spaces. As the standard q-norm $\norm{\cdot}_q$ is a norm on $\mathbb{R}^D$ all conditions for $\norm{ \cdot }_{q,\alpha}$ being a norm, but one are trivially fulfilled, as $\vec{x} \mapsto \log(\vec{x})$ maps into a subset of $\mathbb{R}^D$.
We only need to proof that from $\norm{\vec{x}}_{q,\alpha} = 0$ we can deduce that $\vec{x}$ is the neutral element. 
For $\alpha \neq 0 $ from $\norm{\vec{x}}_{q,\alpha} = 0$ we get $\norm{\log(\vec{x})}_q = 0$ and so $\vec{x} = \vec{1}$. For $\alpha_= 0$ we get from $\norm{\vec{x}}_{q,0} = 0$, $\matr{d} \log(\vec{x}) = 0$, where we dropped the subscript. For each connected subgraph $\matr{d} \log(\vec{x}) = 0$ is equal to $\vec{x}_{i \in  \mathcal{V}_m} = c_m \vec{1}$. Again we can conclude as $\sum_{i \in \mathcal{V}_m} x_i = \kappa_m$ holds, $\vec{x} = \sum_{m = 1}^M \frac{\kappa_m}{\mid \mathcal{V}_m \mid}\vec{1}_{i \in \mathcal{V}_m}$.
\end{proof}

\begin{proof}[Proof of Lemma \ref{cholForgraph}]
First we consider the case where the graph is connected. By Lemma \ref{graphCodaInnerProd} we know that $\matr{L}_{\matr{W}}\vec{1} = 0$. Therefore, as $0= \matr{L}_{\matr{W}}\vec{1} = \matr{U} \matr{\Sigma} \matr{U}' \vec{1} = \sum_{j = 1}^{D-1}  \lambda_j(\sum_{i=1}^D u_{ij})\vec{u}_{j} $, where $\vec{u}_j$ is the j-th column of $\vec{U}$ and orthogonal to each other, we deduce $\sum_{i=1}^D u_{ij} = 0$ for $j = 1,...,D-1$. 
Compute the QR decomposition of $\matr{U} \matr{\Sigma}^{\frac{1}{2}} \matr{U}'$, $ \matr{U}\matr{\Sigma}^{\frac{1}{2}} \matr{U}' = \matr{Q} \matr{R}$. With this $\matr{U} \matr{\Sigma}  \matr{U}' = (\matr{U}\matr{\Sigma}^{\frac{1}{2}} \matr{U}')'(\matr{U}\matr{\Sigma}^{\frac{1}{2}} \matr{U}') = \matr{R}'\matr{Q}'\matr{Q}\matr{R}= \matr{R}'\matr{R}$ holds. As we also have $ \matr{Q}\matr{R}\vec{1} = \matr{U} \matr{\Sigma}^{\frac{1}{2}} \matr{U}' \vec{1} = \vec{0}$, from $\sum_{i=1}^D u_{ij} = 0$ for $j=1,\ldots,D-1$, we see that $\matr{R}\vec{1} = \vec{0}$ must hold. From the latter we immediately obtain $R_{DD}  = 0$ as $\matr{R}$ is an upper triangular matrix. As $\matr{L}_{\matr{W}}$ has only one non-zero eigenvalue all other diagonal elements of $\matr{R}$ must be non-zero  due to the ranks of both matrices needed to be equal. As the sign can be chosen, we are done by writing $\matr{C}$ instead of $\matr{R}$. The non-connected setting follows directly from looking at each $\matr{L}_m$ individually.
\end{proof}

\begin{proof}[Proof Lemma \ref{graphILR}]
Write $\matr{P} = \diag{(\matr{P_1},\ldots,\matr{P}_M)}$.
It is easy to see that both maps are linear mappings. 
The map (\ref{ilrReduced}) has rank $D-M$ as all but one diagonal element of each $\matr{C}_m$ are positive, by Lemma \ref{cholForgraph}; meaning that each $\matr{C}_m$ has rank $\mid \mathcal{V} \mid_m - 1$. This means that the map (\ref{ilrReduced}) is surjective. What remains to show is that it is isometric as injectivity then directly follows. 
We have 
\begin{align*}
    \norm{\vec{x}}^2_{\matr{W},0} &= \log(\vec{x})' \matr{L}_{\matr{W}} \log(\vec{x}) \\
    &= (\matr{P}\log(\vec{x}))'(\matr{P} \matr{L}_{\matr{W}} \matr{P}')(\matr{P} \log(\vec{x})) \\
    &= (\matr{P}\log(\vec{x}))'( \matr{L}_{\matr{P}\matr{W}\matr{P}'})(\matr{P} \log(\vec{x})) \\
    &= \sum_{m= 1}^M (\matr{P_m}\log(\vec{x}_{i \in \mathcal{V}_m}))' \matr{C}_m' \matr{C}_m (\matr{P_m}\log(\vec{x}_{i \in \mathcal{V}_m}))\\ &= \sum_{m= 1}^M (\matr{C}_m \matr{P_m}\log(\vec{x}_{i \in \mathcal{V}_m}))'  (\matr{C}_m \matr{P_m}\log(\vec{x}_{i \in \mathcal{V}_m}))\\
    &= \sum_{m= 1}^M (\matr{C}_{-\mid \mathcal{V}_m \mid} \matr{P_m}\log(\vec{x}_{i \in \mathcal{V}_m}))'  (\matr{C}_{-\mid \mathcal{V}_m \mid} \matr{P_m}\log(\vec{x}_{i \in \mathcal{V}_m})) \\
    &=  \norm{\diag{(\matr{C}_{-\mid \mathcal{V}_1 \mid},\ldots,\matr{C}_{-\mid \mathcal{V}_M \mid})} \diag{(\matr{P}_1,\ldots,\matr{P}_M)}  \log(\vec{x})}_2^2 
\end{align*}
which shows that  (\ref{ilrReduced}) is an isometry. As $\alpha \matr{I} +  \diag{(\matr{P}_1\matr{L}_{1}\matr{P}_1 ',\ldots,\matr{P}_M\matr{L}_{M}\matr{P}_M ')}$ has full rank for $\alpha \neq 0$, the map (\ref{ilrAll}) is surjective. The following chain of equations 
\begin{align*}
    \norm{\vec{x}}^2_{\matr{W},\alpha} &= 
    (\matr{P}\log(\vec{x}))'\matr{P} (\alpha \matr{I} + \alpha_2\matr{L}_{\matr{W}}) \matr{P}'\matr{P}\log(\vec{x})\\
    &=(\matr{P}\log(\vec{x}))'(\alpha \matr{I} + \alpha_2\matr{P}\matr{L}_{\matr{W}}\matr{P}') (\matr{P}\log(\vec{x}))\\
    &= \sum_{m = 1}^M(\matr{P}_m\log(\vec{x}_{i \in \mathcal{V}_m}))'(\alpha\matr{I} + \matr{P}_m\matr{L}_m\matr{P}_m') (\matr{P}_m\log(\vec{x}_{i \in \mathcal{V}_m}))\\
    &= \sum_{m = 1}^M(\matr{P}_m\log(\vec{x}_{i \in \mathcal{V}_m}))' \Tilde{\matr{C}}_m'\Tilde{\matr{C}}_m (\matr{P}_m\log(\vec{x}_{i \in \mathcal{V}_m}))\\
    &=\norm{(\diag{(\Tilde{\matr{C}}_1,\ldots,\Tilde{\matr{C}}_M)}\diag{(\matr{P}_1,\ldots,\matr{P}_M)}\log(\vec{x})}_2^2
\end{align*}
shows that it is an isometry.
\end{proof}

\end{appendices}



\newpage

\end{document}